    \newlist{sublist}{enumerate}{1}
    \setlist[sublist,1]{label=(\alph*)}
\crefname{section}{section}{sections}
\newcommand{\ba}{\begin{aligned}}
\newcommand{\ea}{\end{aligned}}
\newcommand{\bc}{\begin{center}}
\newcommand{\ec}{\end{center}}
\newcommand{\beq}{\begin{equation}}
\newcommand{\eeq}{\end{equation}}
\newcommand{\beqq}{\begin{equation*}}
\newcommand{\eeqq}{\end{equation*}}
\newcommand{\beqa}{\begin{align}}
\newcommand{\eeqa}{\end{align}}
\newcommand{\barr}{\begin{array}}
\newcommand{\earr}{\end{array}}
\newcommand{\bi}{\begin{itemize}}
\newcommand{\ei}{\end{itemize}}
\newcommand{\bC}{\mathbb C}
\theoremstyle{plain}
\newcounter{tho}
\newtheorem{theo}[tho]{Theorem}
\newtheorem{lemma}[tho]{Lemma}
\newtheorem{cor}[tho]{Corollary}
\begin{document}

 \title{On the complex zeros of the wavefunction}

\author{Sacha Cerf}
\affiliation{QAT team, DIENS, \'Ecole Normale Sup\'erieure, PSL University, CNRS, INRIA, 45 rue d'Ulm, Paris 75005, France}

\author{Clara Wassner}
\affiliation{Dahlem Center for Complex Quantum Systems, Freie Universit\"at, 14195 Berlin, Germany}

\author{Jack Davis}
\affiliation{QAT team, DIENS, \'Ecole Normale Sup\'erieure, PSL University, CNRS, INRIA, 45 rue d'Ulm, Paris 75005, France}

\author{Francesco Arzani}
\affiliation{QAT team, DIENS, \'Ecole Normale Sup\'erieure, PSL University, CNRS, INRIA, 45 rue d'Ulm, Paris 75005, France}

\author{Ulysse Chabaud}
\affiliation{QAT team, DIENS, \'Ecole Normale Sup\'erieure, PSL University, CNRS, INRIA, 45 rue d'Ulm, Paris 75005, France}

\date{\today}

\begin{abstract}
   % The wavefunction is ubiquitous in quantum mechanics, quantum chemistry and bosonic quantum information theory. For fermions, the zeros of the wavefunction are central in determining chemical properties, yet it is unclear how to extract the non-classical properties of a bosonic quantum system from its wavefunction. To tackle this problem, we study the zeros of the wavefunction of bosonic quantum systems. Our main technical result is that the wavefunction of most bosonic quantum systems can be extended to a holomorphic function over the complex plane, allowing us to leverage powerful results in complex analysis for the study of the wavefunction and its zeros. As a consequence, we prove a version of Hudson's theorem for the wavefunction and we characterize the Gaussian dynamics of non-Gaussian quantum states by the motion of the zeros of their wavefunction. Our findings suggest that the non-Gaussianity of quantum optical states can be detected by measuring a single quadrature of the electromagnetic field, which we demonstrate in a companion paper \cite{witness}. More generally, our results show that the non-Gaussian features of bosonic quantum systems are encoded in the zeros of their wavefunction.
The Schr\"odinger wavefunction is ubiquitous in quantum mechanics, quantum chemistry, and bosonic quantum information theory.  
Its zero-set for fermionic systems is well-studied and central for determining chemical properties, yet for bosonic systems the zero-set is less understood, especially in the context of characterizing non-classicality.  
Here we study the zeros of such wavefunctions and give them a novel information-theoretic interpretation. 
Our main technical result is showing that the wavefunction of most bosonic quantum systems can be extended to a holomorphic function over the complex plane, allowing the application of powerful techniques from complex analysis. 
As a consequence, we prove a version of Hudson's theorem for the wavefunction and characterize Gaussian dynamics as classical motion of the wavefunction zeros. 
Our findings suggest that the non-Gaussianity of quantum optical states can be detected by measuring a single quadrature of the electromagnetic field, which we demonstrate in a companion paper \cite{witness}. More generally, our results show that the non-Gaussian features of bosonic quantum systems are encoded in the zeros of their wavefunction.
\end{abstract}

\maketitle

%--------------------------------------------

\section{Introduction}

The wavefunction of a physical system provides a fundamental description of its quantum state. Historically, it has led to foundational parallels between quantum theory and wave mechanics \cite{schrodinger1926undulatory,de1992radiations}, and has allowed us to explore post-classical behavior in physics.
% Nowadays, the study of the wavefunction is still a central subject in modern quantum mechanics, both from a metaphysical point of view \cite{everett1973theory,goldstein2013reality} and a more applied perspective \cite{raymer1997measuring}. 
Nowadays the wavefunction is still a central object of study, both from a metaphysical point of view \cite{everett1973theory, goldstein2013reality} and a more applied perspective \cite{raymer1997measuring}.
For instance, the understanding of fundamental properties of electrons heavily relies on trial wavefunctions \cite{laughlin1983anomalous, von2005developments}, while the nodes, or \textit{zeros}, of the wavefunction play an important role in the magnetic properties of quantum systems \cite{riess1970nodal}.
Such zeros govern the computational complexity of simulating fermions \cite{ceperley1980ground, ceperley1991fermion, reynolds1982fixed}, in direct relation with the so-called sign problem \cite{bressanini2002we}.

Beyond the Schr\"odinger wavefunction, the zeros of phase-space representations \cite{cahill1969density,rundle2021overview} have been studied notably in the contexts of quantum chaos \cite{leboeuf1990chaos,arranz1996distribution,leboeuf1996universal} and non-Gaussianity of bosonic quantum systems \cite{lutkenhaus1995nonclassical,chabaud_stellar_2020}. In the latter context, the number of complex zeros of the Husimi phase-space representation \cite{husimi1940some} is directly related to a measure of non-Gaussianity known as the stellar rank, which can be understood as a resource for quantum computational tasks \cite{chabaud2023resources}. A similar intuition holds for the Wigner function \cite{wigner1932quantum}: a pure state is non-Gaussian if and only if its Wigner function has zeros \cite{hudson1974wigner,abreu2025inverse}.
% Interestingly, studying the zeros of phase-space representations allows to model quantum dynamics as a motion of classical particles in phase space, in which the zeros play the role of the classical particles \cite{leboeuf1991phase, chabaud2022holomorphic, bruno2012geometric}.  
Interestingly, the quantum dynamics of phase-space representations sometimes allow an interpretation of the moving zero-set as a many-body classical dynamical system \cite{leboeuf1991phase, chabaud2022holomorphic, bruno2012geometric}.
This is only possible for certain dynamics and when the zeros contain sufficient information about the quantum state. This information is typically associated with a holomorphicity property of the representation \cite{Majorana_1932, bargmann1961hilbert, segal1963mathematical, vourdas2006analytic}.
While for bosons the zeros of phase-space representations provide useful information about the non-classical properties of the quantum system they describe, they do not relate to simple properties of the wavefunction. As such, it is nontrivial to extract the non-classical properties of bosonic quantum systems directly from their wavefunction.
 
Here we address this problem by studying directly the zeros of the bosonic wavefunction. Our contributions are threefold:
\begin{itemize}
    \item \textit{Entire wavefunctions.} We prove that under a simple energy condition, the wavefunction can be extended to an entire function, i.e., a holomorphic function over the whole complex plane (\cref{th:holomorphicwf}). This enables the use of powerful mathematical tools from complex analysis to study the wavefunction.
    \item \textit{Hudson theorem for entire wavefunctions.} As a consequence, we obtain under this energy condition a Hudson theorem for the wavefunction: a pure state is non-Gaussian if and only if its wavefunction has zeros over the complex plane (\cref{th:Hudson}). We further show that the number of zeros of such wavefunctions is equal to the stellar rank. This implies that the wavefunction of a state of finite stellar rank can be written as the product of a polynomial and a Gaussian function (\cref{th:wf_finitestellar}).
    \item \textit{Gaussian dynamics and motion of zeros.} We show that the evolution of the wavefunction under Gaussian Hamiltonians can be seen as a classical many-body dynamical system on the plane, in which the complex zeros of the wavefunction act as classical particles (\cref{th:complexcalogeromoserzeros}). Solving the equations of motion (\cref{th:complexcalogeromoserintegration}), we show that phase-shifts (i.e.\ simple harmonic evolution) typically make the complex zeros of the wavefunction real (\cref{th:wf_have_real_zeros} and \cref{th:odd_stellar_rank_crossing}). This opens the door to detecting non-Gaussianity of bosonic quantum states with a single quadrature measurement, which we explore in a companion paper \cite{witness}.
\end{itemize}

The rest of the paper is structured as follows: after introducing the necessary preliminary material in section~\ref{sec:preliminaries}, we give our main technical result characterizing entire wavefunctions in section~\ref{sec:entire}. Then, we show that the zeros of the wavefunction provide signatures of non-Gaussianity in section~\ref{sec:zeros} by deriving a Hudson theorem for entire wavefunctions and by relating the number of wavefunction zeros to the stellar rank. Finally, in section~\ref{sec:dynamics}, we study the Gaussian dynamics of the wavefunction for finite stellar rank, and relate it to an integrable Calogero--Moser motion of its zeros. We discuss the significance of our results and open questions in section~\ref{sec:conclusion}.

%--------------------------------------------

\section{Preliminaries}
\label{sec:preliminaries}

We denote by $\mathcal H = L^2(\mathbb R)$ the Hilbert space of a single bosonic mode, and by $\hat x$, $\hat p$, and $\hat n = \hat a^\dagger \hat a$ the position, momentum, and number operators, respectively, the annihilation operator $\hat{a}$ being defined by $\hat{a} =\frac1{\sqrt2}(\hat{x} + i\hat{p})$. The eigenstates of the number operator form an infinite countable basis of the Hilbert space called the particle-number, photon-number, or Fock basis and it is denoted as $\{\ket n\}_{n\in\mathbb N}$, with $\ket0$ being the vacuum state.

The (position) wavefunction of a state $\ket\psi$ is given by its expansion along the (generalized) eigenvectors of the position operator and denoted as $x\mapsto\psi(x)$. Similarly, the momentum wavefunction is given by the expansion along the (generalized) eigenvectors of the momentum operator, and can be obtained from the position wavefunction by applying the phase-shift operator $\hat R(\theta)=e^{-i\theta\hat n}$ for $\theta=\frac\pi2$. 

Given a normalized single-mode pure state $\ket\psi=\sum_{n\ge0}\psi_n\ket n$, its stellar (or Bargmann \cite{bargmann1961hilbert,segal1963mathematical}) function is defined as $F^\star_\psi(z)\coloneqq\sum_{n\ge0}\frac{\psi_n}{\sqrt{n!}}z^n$ for $z\in\bC$. It is directly related to the Husimi $Q$ function \cite{husimi1940some} by $Q_\psi(z)=\frac{e^{-|z|^2}}\pi|F^\star_\psi(z^*)|^2$. As a result, the stellar function may be thought of as a wavefunction in the overcomplete basis of coherent states, which are the (right) eigenstates of the annihilation operator. The stellar rank is then defined as the (possibly infinite) number of zeros of the stellar function, and any state $\ket{\psi_r}$ of finite stellar rank $r$ can be expressed as \cite{chabaud_stellar_2020}
\begin{equation}\label{eq:parame_rank_r_state}
\ket{\psi_r} =\hat D(\alpha)\hat S(\chi)\sum_{n =0}^r c_n \ket{n},
\end{equation}
where $\hat S(\chi)= e^{\frac{1}{2}(\chi^* \hat a^2 - \chi  \hat a^{\dagger^2})} $ is the squeezing operator with $\chi \in \bC$, $\hat D(\alpha)= e^{\alpha \hat a ^\dagger - \alpha^*\hat a}$ is the displacement operator with $\alpha \in \bC$, and $c_n \in \bC, \forall n$ with $\sum_{n=0}^r |c_n|^2=1$. In that case, the stellar function is the product of a polynomial of degree $r$ and a Gaussian function. In particular, a pure state is a Gaussian state if and only if its stellar rank is equal to zero \cite{lutkenhaus1995nonclassical}, and if and only if its (position or momentum) wavefunction is also a Gaussian function. 
Importantly, the stellar function is an entire function of order at most $2$, i.e., it can be expanded as a Taylor series over the whole complex plane and its growth is bounded by that of a Gaussian function. This has led to a classification of non-Gaussianity based on techniques from complex analysis \cite{chabaud_stellar_2020,chabaud2022holomorphic}.

%--------------------------------------------

\section{Entire wavefunctions}
\label{sec:entire}

Hereafter, unless specified otherwise, the wavefunction refers to the position wavefunction.
In general, a valid wavefunction can be any square-integrable function over the reals, that is, an element of $\mathcal H = L^2(\mathbb R)$. We start by a key result which allows to focus our attention on entire functions, i.e., functions that are holomorphic over the whole complex plane.

\begin{theo}[Entire wavefunctions]\label{th:holomorphicwf}
    Let $\ket\psi=\sum_{n\ge0}\psi_n\ket n$ be a pure state. Assume there exists an $s>1$ such that $\ket\psi$ satisfies the energy bound
    \begin{equation}\label{eq:s_energy_bound}
        \langle s^{\hat n}\rangle_\psi=\sum_{n\ge0}s^n|\psi_n|^2<+\infty.
    \end{equation}
    Then the wavefunction of $\ket\psi$ can be extended over the complex plane to an entire function of order at most $2$.
\end{theo}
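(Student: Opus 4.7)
The plan is to study the formal series extension
\beq
\psi(z) := \sum_{n\ge0} \psi_n\,\phi_n(z),
\qquad
\phi_n(z) = \frac{H_n(z)}{\sqrt{2^n n!\sqrt{\pi}}}\,e^{-z^2/2},
\eeq
where each $\phi_n$ is the analytic continuation to $\bC$ of the $n$-th Hermite--Gauss function. Each $\phi_n$ is entire because $H_n$ is a polynomial, so by the Weierstrass convergence theorem it suffices to prove that this series converges uniformly on every compact subset of $\bC$ and to verify a Gaussian growth bound $|\psi(z)|\le A\exp(B|z|^2)$.

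The first step is a pointwise estimate on $|\phi_n(z)|$ valid for complex $z$. Starting from the Hermite generating function $e^{2zt-t^2} = \sum_n H_n(z)\,t^n/n!$ and Cauchy's integral formula on the circle $|t|=r$, one obtains $|H_n(z)| \le (n!/r^n)\exp(r^2 + 2r|z|)$. Choosing $r=\sqrt{n/2}$, invoking Stirling's approximation, and using $|e^{-z^2/2}|\le e^{|z|^2/2}$, this gives
\beq
|\phi_n(z)| \le C\,n^{1/4}\,\exp\!\bigl(|z|^2/2 + \sqrt{2n}\,|z|\bigr)
\eeq
for some universal constant $C>0$.

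The second step leverages the energy condition via the Cauchy--Schwarz inequality,
\beq
\sum_{n\ge 0}|\psi_n|\,|\phi_n(z)|
\le \langle s^{\hat n}\rangle_\psi^{1/2}
\Bigl(\sum_{n\ge 0} s^{-n}|\phi_n(z)|^2\Bigr)^{1/2}.
\eeq
The first factor is finite by hypothesis, and substituting the previous estimate, the second factor is controlled by $e^{|z|^2/2}$ times $\bigl(\sum_n \sqrt{n}\,e^{-n\log s + 2\sqrt{2n}|z|}\bigr)^{1/2}$. The exponent $-n\log s + 2\sqrt{2n}\,|z|$ is concave in $n$ with maximum $2|z|^2/\log s$ attained at $n_\star = 2|z|^2/(\log s)^2$, so a standard Laplace-type argument bounds the sum by $P(|z|)\exp\bigl(2|z|^2/\log s\bigr)$ for some polynomial $P$. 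Altogether this yields
\beq
|\psi(z)| \le C_s\,P(|z|)\,\exp\!\Bigl(\bigl(\tfrac12 + \tfrac{1}{\log s}\bigr)|z|^2\Bigr),
\eeq
which is finite for every $z\in\bC$, uniform on compact sets, and thus establishes both the entirety of the extension and its order being at most~$2$.

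The delicate point is the Hermite estimate and its subsequent Laplace-type control: the factor $\exp(\sqrt{2n}\,|z|)$ is subexponential in $n$ but beats every polynomial, so the geometric decay $s^{-n/2}$ provided by $s>1$ is precisely what tames the high-$n$ tail. Consistently, the effective type $\tfrac12 + \tfrac{1}{\log s}$ diverges as $s\downarrow 1$, indicating that mere square-integrability of $\ket\psi$ is insufficient to guarantee analytic continuation beyond $\bR$.
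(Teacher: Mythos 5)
Your proof is correct, but it takes a genuinely different route from the paper. The paper works directly with power series: it expands each Hermite polynomial and the Gaussian factor $e^{-x^2/2}$ coefficient-by-coefficient, rearranges the double sums into a single Taylor series in $x$, and controls the coefficients through binomial-sum estimates with an auxiliary parameter $t=s^{\alpha}$, concluding with Cauchy--Schwarz against $\langle s^{\hat n}\rangle_\psi$; this yields the bound $|\psi(z)|^2\le Ke^{L|z|^2}$ together with an explicit entire power-series representation, at the cost of having to justify the rearrangement of the sums. You instead keep the Hermite--Gauss functions $\phi_n$ intact, bound them on all of $\bC$ via the generating function and a Cauchy estimate on $|t|=\sqrt{n/2}$ (giving $|\phi_n(z)|\le Cn^{1/4}e^{|z|^2/2+\sqrt{2n}|z|}$, which is correct), and then combine Cauchy--Schwarz with a Laplace-type bound on $\sum_n \sqrt n\, e^{-n\log s+2\sqrt{2n}|z|}$; entirety follows from the Weierstrass $M$-test and the theorem on locally uniform limits of entire functions rather than from an explicit Taylor expansion. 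Your approach is more modular (the only Hermite-specific input is the generating-function estimate) and sidesteps the series-rearrangement bookkeeping, while the paper's approach buys an explicit power series and sharper track of constants; both give Gaussian-type growth with a type constant diverging as $s\downarrow1$, hence order at most $2$. The only point you leave implicit is that the locally uniform limit agrees almost everywhere with the $L^2$ wavefunction on $\mathbb R$ (so it really is an extension), but this is immediate because the same series converges to $\psi$ in $L^2(\mathbb R)$ and locally uniformly, and the Laplace-type step, though only sketched, is standard and goes through.
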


\noindent We give an informal proof hereafter and refer to Appendix~\ref{app:entire} for the full proof.

\begin{proof}[Proof sketch]
    Let us write such an energy-bounded state as $\ket\psi=\sum_{n=0}^{+\infty}\psi_n\ket n$ in the Fock basis; its wavefunction is given by
    \begin{equation}
        \psi(x)=\frac1{\pi^{1/4}}e^{-x^2/2}\sum_{n=0}^{+\infty}\frac{\psi_n}{\sqrt{2^nn!}}H_n(x),
    \end{equation}
    where $H_n$ is the $n^{\text{th}}$ Hermite polynomial.  We expand $H_n(x)\coloneqq\sum_{k=0}^nh_{n,k}x^k$ and $e^{-x^2/2}=\sum_{l=0}^{+\infty}\frac{(-1)^lx^{2l}}{2^ll!}$ and collect like terms,
    \begin{equation}
        \psi(x)=\sum_{m=0}^{+\infty}\left(\sum_{n=0}^{+\infty}c_{mn}\,\psi_n\right)x^m,
    \end{equation}
    where the energy assumption (\ref{eq:s_energy_bound}) implies an infinite radius of convergence of the resulting series. As a byproduct, extending this function over the complex plane yields the bound
    \begin{equation}\label{eq:bound_entire}
        |\psi(z)|^2\le Ke^{L|z|^2}, \qquad \forall \, z \in \mathbb C
    \end{equation}
    for some constants $K$ and $L$, which implies that the function $z\mapsto\psi(z)$ has an order of growth at most $2$.
\end{proof}

This result allows us to import tools from complex analysis for studying the wavefunctions of non-Gaussian states so long as they satisfy an energy bound of the form of Eq.~(\ref{eq:s_energy_bound}), which is the case for most states. In particular, we show this is the case for the set of pure states with finite stellar rank (see \cref{th:wf_finitestellar}), which is dense in the set of all pure states \cite{chabaud_stellar_2020}. Note that we considered the position wavefunction, but a similar result holds for any phase-shifted wavefunction as the energy bound is rotation-invariant.

%--------------------------------------------

\section{Non-Gaussianity and wavefunction zeros}
\label{sec:zeros}

\cref{th:holomorphicwf} has significant complex-analytical consequences, which we discuss in this section. Let us start by mentioning a simple but interesting example: because the zeros of entire functions must be isolated, if the wavefunction of a state $\ket\psi$ vanishes on a continuous segment, then $\ket\psi$ violates the energy bound in Eq.~\eqref{eq:s_energy_bound} for all $s>1$.

%--------------------------------------------

\subsection{A Hudson theorem for the wavefunction}

While typically phrased in terms of negative values of the Wigner function, Hudson's theorem can be alternatively stated as identifying a pure state $\ket\psi$ as non-Gaussian if and only if its Wigner function has at least one zero \cite{hudson1974wigner,abreu2025inverse}, or, equivalently, if and only if its Husimi $Q$ function has at least one zero \cite{lutkenhaus1995nonclassical}. Recall that, the zeros of the Husimi $Q$ function correspond (up to complex conjugation and multiplicity) to the zeros of the stellar function, which can be thought of as a wavefunction in the overcomplete basis of coherent states. Since the marginals of the Wigner function yield the probability density functions for the phase-shifted wavefunctions \cite{Bertrand_Bertrand_1987}, it is natural to ask if a similar characterization of non-Gaussianity can be obtained based on the zeros of wavefunctions. This is established by our next result.

\begin{theo}[Hudson theorem for the wavefunction]\label{th:Hudson}
    For a pure state $\ket\psi$, assume there exists an $s>1$ such that $\ket\psi$ satisfies the energy bound $\langle s^{\hat n}\rangle_\psi<+\infty$. Then $\ket\psi$ is non-Gaussian if and only if its wavefunction has a zero over $\mathbb C$.
\end{theo}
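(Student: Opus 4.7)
My plan is to split the equivalence and dispatch the easy direction first. If $\ket\psi$ is Gaussian, its position wavefunction takes the standard form $\psi(x) = C\,e^{-\alpha x^2 + \beta x}$ with $\mathrm{Re}(\alpha) > 0$, and the extension $z \mapsto C\,e^{-\alpha z^2 + \beta z}$ is entire with no zero on $\mathbb C$, since the complex exponential vanishes nowhere. This settles the ``only if'' direction without using the energy bound.

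For the converse, I would argue by contraposition: assuming $\psi$ has no complex zero, I want to conclude that $\ket\psi$ is Gaussian. The key ingredient is \cref{th:holomorphicwf}: under the energy bound \eqref{eq:s_energy_bound}, the wavefunction extends to an entire function of order at most $2$. I would then invoke Hadamard's factorization theorem, which states that an entire function of finite order $\rho$ with no zeros is of the form $e^{P(z)}$ for some polynomial $P$ with $\deg P \leq \rho$. Applied to our situation this yields $\psi(z) = e^{a z^2 + b z + c}$ for some $a,b,c \in \mathbb C$. Normalizability of $\ket\psi$ then forces $\int_{\mathbb R} |\psi(x)|^2 \, dx < +\infty$, which immediately rules out $a = 0$ (linear exponentials are not square-integrable on $\mathbb R$) and forces $\mathrm{Re}(a) < 0$. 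Thus $\psi$ has the form of a Gaussian wavefunction, which by the parametrization recalled after Eq.~\eqref{eq:parame_rank_r_state} corresponds to a Gaussian pure state.

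The bulk of the difficulty is already encapsulated by \cref{th:holomorphicwf}, and the rest of the argument is essentially bookkeeping around Hadamard's theorem. The only part requiring mild care is verifying that the polynomial $P$ appearing in the Hadamard representation genuinely corresponds to a Gaussian wavefunction, i.e.\ ruling out the degenerate cases $a = 0$ (linear or constant $P$) via integrability on the real line. I do not anticipate any subtle analytic obstruction, since the order-$2$ growth bound \eqref{eq:bound_entire} matches exactly the degree allowed in the Hadamard exponent, so no additional control on $P$ is needed beyond what $L^2(\mathbb R)$ normalizability provides.
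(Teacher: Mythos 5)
Your proposal is correct and follows essentially the same route as the paper: invoke \cref{th:holomorphicwf} to get an entire extension of order at most $2$, apply Hadamard's factorization to a zero-free such function to obtain $e^{az^2+bz+c}$, and use $L^2(\mathbb R)$ normalizability to pin down the Gaussian form. Your explicit handling of the degenerate case $a=0$ is a small refinement the paper leaves implicit, but the argument is otherwise the same.
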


\begin{proof}
    If $\ket\psi$ is a Gaussian state, then its wavefunction is a Gaussian function that does not vanish over the complex plane.
    
    Conversely, let us assume that $\ket\psi$ is a state such that $\langle s^{\hat n}\rangle_\psi<+\infty$ for some $s>1$, and that its (extended) wavefunction $z\mapsto\psi(z)$ 
    does not vanish over $\mathbb C$.
    By \cref{th:holomorphicwf}, $z\mapsto\psi(z)$ is an entire function of order at most $2$. And since $z\mapsto\psi(z)$ does not vanish the Hadamard--Weierstrass factorization theorem implies that it is a Gaussian function. Hence, $x\mapsto\psi(x)$ is also a Gaussian function and $\ket\psi$ is a Gaussian state.
\end{proof}

Once again, a similar result holds for the momentum wavefunction (or any other phase-shifted wavefunction) because the energy bound is invariant under phase shifts and phase shifts maps Gaussian states to themselves.  This result establishes the zeros of the Schr\"odinger wavefunction as fundamental signatures of non-Gaussianity.
In what follows, we strengthen this result for states of nonzero but finite stellar rank.

%--------------------------------------------

\subsection{Wavefunction zeros for finite stellar rank}
\label{sec:wf_finite_stellar}

The stellar rank quantifies the amount of non-Gaussianity in a quantum state. As discussed in \cref{sec:preliminaries}, it is closely related to the zeros of phase-space representations: a pure state has finite stellar rank if and only if its Husimi $Q$ function has a finite number of zeros \cite{chabaud_stellar_2020}, or equivalently if and only if the zero-set of its Wigner function is bounded \cite{abreu2025inverse}. %Interestingly, the (position or momentum) wavefunction can be obtained as a marginal of the stellar function with respect to the Gaussian measure \cite{hall2004range}.
In this section we characterize the stellar rank of non-Gaussian pure quantum states based on the zeros of their wavefunction.

\begin{theo}[Finite stellar rank and wavefunction zeros]\label{th:wf_finitestellar}
    Let $\ket\psi$ be a pure state with stellar rank $r \in \mathbb N$. Then there exists an $s>1$ such that $\langle s^{\hat n}\rangle<+\infty$.  Furthermore, the wavefunction of $\ket{\psi}$ has $r$ complex zeros counted with multiplicity and takes the form 
    \begin{align}
    \psi(x) &= P(x) \, e^{-ax^2 + bx + c},
    \label{eq:general_form_*r_state}
    \end{align}
    for some complex polynomial $P$ of degree $r$, where $a,b,c\in\mathbb{C}$ and $\text{Re}(a) > 0$. 
    
    Conversely, if $\ket\psi$ satisfies $\langle s^{\hat n}\rangle<+\infty$ and $z\mapsto\psi(z)$ has exactly $r$ zeros, then $\ket\psi$ has stellar rank $r$.
\end{theo}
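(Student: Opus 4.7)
The plan is to prove the forward direction by direct computation starting from the parametrization in Eq.~\eqref{eq:parame_rank_r_state}, and the converse by invoking Hadamard's factorization theorem on the entire extension of the wavefunction guaranteed by \cref{th:holomorphicwf}.

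For the structural claim, I start from $\ket{\psi_r}=\hat D(\alpha)\hat S(\chi)\sum_{n=0}^r c_n\ket n$ with $c_r\neq 0$. The core-state wavefunction is $Q(x)e^{-x^2/2}$, where $Q(x)=\pi^{-1/4}\sum_{n=0}^r c_nH_n(x)/\sqrt{2^nn!}$ is a polynomial of exact degree $r$, the leading coefficient being $c_r 2^r/(\pi^{1/4}\sqrt{2^rr!})\neq 0$ and inherited solely from $H_r$. Squeezing and displacement are Gaussian unitaries that, in position representation, act respectively as a metaplectic scaling with a quadratic phase and as a translation with a linear phase; applying them to $Q(x)e^{-x^2/2}$ preserves a polynomial factor of degree $r$ and reshapes the Gaussian envelope into $e^{-ax^2+bx+c}$, with $\mathrm{Re}(a)>0$ forced by $L^2$-normalizability. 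The $r$ complex zeros then follow from the fundamental theorem of algebra, since the exponential factor never vanishes.

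For the energy bound, I would compute the Fock coefficients $\psi_n=\int\phi_n(x)\psi(x)\,dx$ via the Hermite generating function $\sum_nH_n(x)t^n/n!=e^{2xt-t^2}$ and a standard Gaussian integral. The resulting generating function $\sum_n\psi_n\sqrt{2^nn!}\,t^n/n!$ takes the form $R(t)e^{-\rho t^2+\beta t+\gamma}$, where $R$ is a polynomial of degree $r$ inherited from $P$ and $\rho=(2a-1)/(2a+1)$ satisfies $|\rho|<1$ precisely when $\mathrm{Re}(a)>0$. Extracting Taylor coefficients then yields $|\psi_n|^2\le C(n)|\rho|^n$ with sub-exponential $C(n)$, so $\langle s^{\hat n}\rangle_\psi=\sum_n s^n|\psi_n|^2$ converges for any $s\in(1,1/|\rho|)$.

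For the converse, the hypothesis $\langle s^{\hat n}\rangle_\psi<+\infty$ combined with \cref{th:holomorphicwf} extends $\psi$ to an entire function of order at most $2$. If this extension has exactly $r$ zeros, Hadamard's factorization yields $\psi(z)=P(z)e^{g(z)}$ with $\deg P=r$ and $g$ a polynomial of degree at most $2$, and $L^2$-normalizability on $\mathbb R$ forces $g(x)=-ax^2+bx+c$ with $\mathrm{Re}(a)>0$. To recover the stellar rank, I would compute $F^\star_\psi(z)\propto e^{|z|^2/2}\langle z^*|\psi\rangle$ as a Gaussian integral of $P(x)$ against the coherent-state wavefunction, which evaluates to a polynomial of degree exactly $r$ in $z$ times a Gaussian, so $F^\star_\psi$ has exactly $r$ zeros and the stellar rank is $r$. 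The main obstacle I anticipate lies in the energy-bound step: the polynomial factor $P(x)$ modifies the generating-function computation, and controlling the sub-exponential prefactor $C(n)$ while establishing the key inequality $|\rho|<1$ requires careful bookkeeping of the Gaussian integrals and the resulting Hermite-like expansion.
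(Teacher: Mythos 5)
Your proposal is sound, but it reaches the result by noticeably different routes than the paper in two of the three parts, so a comparison is worthwhile. For the energy bound, the paper does not compute Fock coefficients at all: it proves three lemmas (the squeezed vacuum satisfies $\langle s^{\hat n}\rangle<\infty$ for $s<1/\tanh|\xi|$, and the bound survives photon addition and displacement for any slightly smaller $s$) and then uses the fact that every finite-stellar-rank state is obtained from a squeezed vacuum by finitely many displacements and creation operators. Your generating-function route is legitimate and essentially a Bargmann-side computation: the value $\rho=(2a-1)/(2a+1)$ is correct, and $|2a+1|^2-|2a-1|^2=8\,\mathrm{Re}(a)$ gives exactly the equivalence $|\rho|<1\Leftrightarrow\mathrm{Re}(a)>0$; the price, as you note, is the Cauchy-estimate bookkeeping needed to turn the Taylor coefficients of $R(t)e^{-\rho t^2+\beta t+\gamma}$ into the bound $|\psi_n|^2\le C(n)|\rho|^n$ after reinstating the $\sqrt{2^n n!}/n!$ factors, whereas the paper's compositional argument avoids asymptotics entirely. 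For the structural claim, the paper writes $\ket\psi=P_\psi(\hat a^\dagger)\hat G\ket0$ and proves exact degree $r$ by an induction showing the leading coefficient picks up a factor $(1+2a)\neq0$; you instead push the core-state wavefunction through $\hat S(\chi)\hat D(\alpha)$. Be aware that for complex $\chi$ the position-representation action is a general linear canonical (Fresnel-type) integral, not merely a scaling times a quadratic chirp, so the statement that the polynomial degree is \emph{exactly} preserved is the one step you assert rather than prove; it does hold (e.g.\ by the Gaussian-moment/shift computation, or by invertibility within the class of polynomial-times-Gaussian functions), but it deserves the same explicit treatment the paper gives via its $(1+2a)$ induction. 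For the converse, both you and the paper invoke Theorem~\ref{th:holomorphicwf} plus Hadamard--Weierstrass and use square-integrability to force $\mathrm{Re}(a)>0$; your additional step of computing $F^\star_\psi(z)\propto e^{|z|^2/2}\langle z^*|\psi\rangle$ as a Gaussian integral to exhibit exactly $r$ stellar zeros makes explicit what the paper leaves implicit, which is a small gain in completeness (again modulo the same exact-degree check for the resulting polynomial in $z$).
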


\begin{proof} We defer the proof segment showing the existence of such an $s>1$ to \cref{app:ebound_fsr} (see Lemma~\ref{applem:energySR}).  To show the above form of the wavefunction, note that any pure state $\ket{\psi}$ of finite stellar-rank $r$ may be equivalently expressed as \cite{chabaud_stellar_2020}
\begin{equation}
    \ket{\psi} = P_\psi(\hat a^\dagger) \hat G \ket{0},
\end{equation}
where $\hat{G}$ is a Gaussian unitary (i.e.\ any combination of displacement and squeezing) and $P_\psi(\hat a^\dagger) = \sum_{k=0}^r c_k (\hat a^\dagger)^k$ is a complex polynomial of degree $r$ in the creation operator $\hat a^\dag$ with $c_r \neq 0$.  The state $\hat G \ket{0}$ for any $\hat G$ is by definition a squeezed coherent state \cite{V_V_Dodonov_2002}, the wavefunction of which takes the form of a Gaussian wave packet:
\begin{equation}
    \langle x | \hat G | 0 \rangle \sim e^{- ax^2 + bx + c},
\end{equation}
where $a,b,c\in\mathbb{C}$ such that $\text{Re}(a) > 0$.  Through the identifications $\hat x \mapsto x$ and $\hat p \mapsto -i\partial_x$, the action of the polynomial $P_\psi(\hat a^\dagger)$ on this wavefunction becomes
\begin{equation}
\begin{split}
    \psi(x) &\sim P_\psi\!\left(\frac{x - \partial_x}{\sqrt{2}}\right) e^{- ax^2 + bx + c} \\
    &= \left[ \sum_{k=0}^r c_k (x - \partial_x)^k \right] e^{-ax^2 + bx + c}.
\end{split}
\end{equation}
Each term $(x - \partial_x)^k e^{-ax^2 + bx + c}$ produces a polynomial of degree at most $k$, multiplied by the same Gaussian packet; see for example \cite{blasiak2011combinatorial, wilcox1967exponential} for explicit normal forms.  Summing such terms and collecting powers of $x$, the remaining wavefunction must be of the form
% The normal form of the operator $(M_x - \partial_x)^k$ is given \cite{blasiak2011combinatorial, wilcox1967exponential} by
% \begin{equation}
%    (M_x - \partial_x)^k = \sum_{\ell,m=0}^k I^{(k)}_{\ell,m} M_x^\ell \partial_x^m,
% \end{equation}
% where
% \begin{equation}
%     I^{(k)}_{\ell,m} = 
%     \begin{cases}
%         (-1)^{(k+\ell+m)/2} \frac{k!}{2^{(k-\ell-m)/2}((k-\ell-m)/2)!\ell ! m!}, & k - \ell - m \geq 0 \quad \text{even} \\
%         0, & k - \ell - m \geq 0  \quad \text{odd}
%     \end{cases}.
% \end{equation}
% Similar to the generating function of the Hermite polynomials, the action of $\partial_x^m$ on the Gaussian will produce a polynomial with degree $m$.  Hence the action of $I^{(k)}_{\ell,m} M_x^\ell \partial_x^m$ will yield a polynomial of degree $\ell+m$.  Consequently, one can furthermore see that the action of $(M_x - \partial_x)^k$ will yield a polynomial $P^{(k)}$ of at most degree $k$, with leading term coming from the contributions where $\ell+m = k$:
% \begin{align}
%     (M_x - \partial_x)^k e^{ax^2 + bx + c} &= \left[ \sum_{\ell,m=0}^k I^{(k)}_{\ell,m} M_x^\ell \partial_x^m \right] e^{ax^2 + bx + c} = P^{(k)}(x) \, e^{ax^2 + bx + c}.
% \end{align}
% And so the full state becomes
\begin{align}
    % \psi_r(x) &\sim \left[ \sum_{k=0}^n c_k P^{(k)}(x) \right] e^{-\frac{a}{2}x^2 + bx + c} \\
    \psi(x) &= P(x) \, e^{-ax^2 + bx + c},
\end{align}
for some complex polynomial $ P(x)$ of degree at most $r$, with coefficients depending on $a,b,c$, and $\{c_k\}$. We show now that the polynomial will always be of exactly degree $r$ via induction. For stellar-rank $r=1$ the linear (and hence leading) coefficient of the polynomial is $\frac{c_1}{\sqrt{2}}(1+2a)$ which is always non-zero because $\Re(a)>0$. We denote the polynomial in the wavefunction of a state with stellar-rank $r$ as $P^{(r)}(x)$ and we assume that it has degree $r$ with leading coefficient $p_r \neq 0$. Then direct computation shows that the leading coefficient of $P^{(r+1)}(x)$ is $p_r(1+2a)$ and so is also non-zero. This proves that $P^{(r)}(x)$ is of degree $r $ for all $r$. While this polynomial may not generally have $r$ real zeros, the stellar rank $r$ is nonetheless imprinted into the wavefunction as the number of complex zeros, in a manner analogous to the stellar function.  This remains true along all choices of phase-space rotation angle as such a rotation can be absorbed into the unitary $\hat G$.

For the reverse statement, notice that $\langle s^{\hat n}\rangle<+\infty$ implies by Theorem~\ref{th:holomorphicwf} that for any angle $\theta$ the wavefunction of $\ket{\psi}$ is an entire function of order at most $2$. By the Hadamard--Weierstrass factorization theorem, such a function with exactly $r$ complex zeros (counted with multiplicity) is the product of a polynomial of degree $r$ with a Gaussian function of the form $z\mapsto e^{-az^2+bz+c}$, with $a,b,c\in\mathbb C$. Finally, the fact that the wavefunction is square-integrable over the real line implies that $\Re(a)>0$, which concludes the proof. 
\end{proof}

Note that the energy condition in \cref{th:wf_finitestellar} cannot be omitted: for instance, the position wavefunction $x\mapsto e^{-x^4}$ (up to normalisation) is non-Gaussian but does not vanish. However, it is the wavefunction of a pure state which does not satisfy the energy condition in Eq.~(\ref{eq:s_energy_bound}) for any $s>1$.

% The quadrature probability distribution also takes a similar form:
% \begin{equation}
% \begin{aligned}
%     |\psi_r(x)|^2 &= |P_{\psi_r} (x)|^2 e^{2\text{Re}(-\frac{a}{2}x^2 + bx + c)} \\
%     &= \tilde P_{\psi_r} (x) e^{2\text{Re}(-\frac{a}{2}x^2 + bx + c)}
% \end{aligned}
% \end{equation}
% for some real polynomial $\tilde P_{\psi_r} (x)$ of degree $2r$.  Thus if one had perfect knowledge of any homodyne distribution of a pure state, they could conclude the stellar rank despite not having complete knowledge of the state.

% Theorem~\ref{th:wf_finitestellar} provides a complete characterization for states of finite stellar rank; in particular, under the energy condition \eqref{eq:s_energy_bound} the number of complex zeros of the wavefunction (counted with multiplicity) in Theorem~\ref{th:Hudson} is directly equal to the stellar rank.

%--------------------------------------------

\section{Gaussian dynamics and wavefunction zeros}
\label{sec:dynamics}

In this section we study the dynamics of the position wavefunction $\psi(x,t)$ under Gaussian evolution, i.e.\ quantum dynamics generated by Hamiltonians that are polynomials of degree at most $2$ in the position and momentum operators.  Previous work \cite{Mancini_Manko_Tombesi_1995, Mancini_Manko_Tombesi_1996, Mancini_Manko_Tombesi_1997_Classical} has studied the Schr\"odinger evolution of 
marginals of the Wigner function, 
% the Radon transform of the Wigner function (i.e.\ the set of all quadratures),
with an emphasis on understanding how this depends on symplectic transformations. 
%any symplectically transformed quadrature basis (i.e.\ a squeezed quadrature basis).  
Here we focus on the extended wavefunction and its zero-set.

\subsection{Gaussian dynamics}

We first show that the entire extension $\psi(z,t)$ of $\psi(x,t)$ follows the same partial differential equation as $\psi(x,t)$, after replacing $\frac{\partial}{\partial x}$ with $\frac{\partial}{\partial z}$.

\begin{lemma}[Schr\"odinger equation over complex space]
    \label{lem:schrodingerextension}
    Let $\ket{\psi}$ be a pure state satisfying the energy condition in \cref{eq:s_energy_bound} and let
    $\hat{H} = P(\hat{x}, \hat{p})$ be any polynomial Hamiltonian.
    Then under the unitary evolution generated by $\hat H$, the extended wavefunction $\psi(z,t)$ satisfies the following Schrödinger equation in complex variables:
    \begin{equation}
        \label{eq:wavefuncevolutioncomplex}
        i \frac{\partial\psi(z,t)}{\partial t} = P\!\left(z, -i\frac{\partial}{\partial z}\right)\psi(z,t).
    \end{equation}
\end{lemma}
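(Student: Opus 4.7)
The plan is to invoke the identity principle for holomorphic functions: both sides of the claimed equation, viewed as functions of $z$ at fixed $t$, will be shown to be entire, and to agree on the real axis via the ordinary Schr\"odinger equation; hence they must agree on all of $\mathbb{C}$.

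First, I would confirm that along the unitary evolution generated by $\hat{H}$ the energy bound \eqref{eq:s_energy_bound} continues to hold, so that \cref{th:holomorphicwf} applies at each time $t$ and yields an entire extension $z\mapsto\psi(z,t)$ of order at most $2$. For Gaussian $\hat{H}$ this is automatic from the structure of the evolution, and for more general polynomial $\hat{H}$ it can be verified on suitable dense domains such as Schwartz space. Second, since $P$ is a polynomial in two arguments, $P(z,-i\partial_z)$ acts on entire functions by multiplying by polynomials in $z$ and taking complex derivatives, so $P(z,-i\partial_z)\psi(z,t)$ is also entire in $z$. Third, on the real axis, the ordinary Schr\"odinger equation gives $i\partial_t\psi(x,t)=P(x,-i\partial_x)\psi(x,t)$. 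Introducing
\begin{equation*}
F(z,t) \coloneqq i\partial_t\psi(z,t) - P(z,-i\partial_z)\psi(z,t),
\end{equation*}
we see that $z\mapsto F(z,t)$ is entire for each $t$ and vanishes on $\mathbb{R}$; the identity principle then forces $F(z,t)\equiv 0$, which is the claimed complex Schr\"odinger equation.

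The main obstacle is the bookkeeping needed to conclude that the time derivative of the entire extension coincides with the entire extension of the time derivative, i.e.\ that $\partial_t[\psi(z,t)]$ equals the analytic continuation of $\partial_t\psi(x,t)$. A natural way around this is to work directly in the Fock basis: writing $\psi(z,t)=\sum_{n\ge 0}\psi_n(t)\phi_n(z)$, with $\phi_n$ the (entire) Hermite functions, one can use the energy bound together with the growth estimate \eqref{eq:bound_entire} to establish uniform convergence on compact subsets of $\mathbb{C}$ uniformly in $t$ over short time intervals. This legitimizes exchanging $\partial_t$ with the series and with the complex differentiation $\partial_z$, thereby closing the identity-principle argument.
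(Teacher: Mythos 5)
Your route is sound but genuinely different from the paper's. The paper never invokes the identity principle: it expands $\psi(z,t)=\sum_n c_n(t)z^n$ (well defined by \cref{th:holomorphicwf}), rewrites the real Schr\"odinger equation by matching powers of $x$ to obtain an explicit ODE system for the coefficients $c_n(t)$, and then observes that the complex equation \eqref{eq:wavefuncevolutioncomplex}, expanded in powers of $z$, is equivalent to the \emph{same} ODE system; equality of the two sides follows coefficient by coefficient. Your argument instead keeps the functions intact: both sides of \eqref{eq:wavefuncevolutioncomplex} are entire in $z$, they agree on $\mathbb{R}$ by the ordinary Schr\"odinger equation, hence agree everywhere. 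This is cleaner conceptually once the analytic bookkeeping is in place, and you correctly identify the crux, namely that $\partial_t$ of the extension is the extension of $\partial_t\psi(x,t)$; note that the paper's coefficient-matching proof quietly relies on the same term-by-term time differentiation, so your explicit treatment of it (locally uniform convergence of the differentiated Fock/Taylor series) is if anything more careful. Also, your use of the identity principle does not clash with the paper's closing caveat about analytic continuation: that caveat concerns \emph{defining} the extension from data on the non-open set $\mathbb{R}$, whereas you only compare two functions already known to be entire.

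Two caveats. First, your claim that persistence of the energy bound \eqref{eq:s_energy_bound} under a general polynomial Hamiltonian ``can be verified on suitable dense domains such as Schwartz space'' does not work as stated: Schwartz regularity corresponds to super-polynomial decay of the Fock coefficients, while \eqref{eq:s_energy_bound} demands geometric decay, which is strictly stronger and is not obviously preserved by non-Gaussian polynomial evolutions. The honest statement is that the lemma implicitly presupposes that $\psi(z,t)$ exists (i.e.\ the bound holds at each time considered), exactly as the paper's proof does when it declares the series expansion of $\psi(z,t)$ well defined; in the paper's actual applications (Gaussian Hamiltonians, finite stellar rank) this is guaranteed by the results of Appendix B. Second, both your argument and the paper's assume the real-line Schr\"odinger equation holds pointwise rather than merely in the $L^2$ sense; this is harmless here but should be flagged rather than used silently.
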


\noindent We give a short proof in Appendix~\ref{app:schro}.

When $\ket{\psi}$ is a single-mode quantum state of stellar rank $r$, Theorem~\ref{th:wf_finitestellar} ensures that its extended wavefunction can be expressed as the product of a Gaussian function and a polynomial of degree $r$, with $r$ complex zeros. Moreover, Gaussian unitary evolutions leave the stellar rank invariant \cite{chabaud_stellar_2020} as can be seen from Eq.~\eqref{eq:parame_rank_r_state}. Taken together, it means that we can write: 
\begin{equation}
\psi(z,t) \equiv e^{a(t)z^2 + b(t)z + c(t)}\prod_{k = 1}^r(z-\lambda_k(t)),
\end{equation}
where $a, b, c$ and $\lambda_k, 1\le k \le r$ are complex valued functions.
 In that case, from Eq.~\eqref{eq:wavefuncevolutioncomplex} we derive a differential system satisfied by the $r$ zeros $\lambda_k$ of $\psi(z,t)$ when $\hat H$ is a Gaussian Hamiltonian. This system for the zeros is valid as long as the zeros are distinct. However, as we will see, it is possible to derive from the differential system an expression for $\psi$ which is valid for all $t$, no matter the multiplicity of its zeros. 
\begin{theo}[Motion of zeros of the extended wavefunction]
    \label{th:complexcalogeromoserzeros}
    Consider a single-mode quantum state $\ket{\psi}$ of stellar rank $r$, under a Gaussian evolution generated by the quadratic Hamiltonian: 
    \begin{equation}
    \hat H_G = A\hat x^2 + B \hat p^2 + C \frac{\hat x \hat p + \hat p \hat x}{2} + D \hat x + E \hat p + F.
    \end{equation}
    Let $I$ be an open interval of $\mathbb{R}$, and suppose that for all $t \in I$, $1 \le j, k \le r$: $\lambda_k(t) \neq \lambda_j(t)$.
    Then, for all $t \in I$: 
\begin{equation}
\resizebox{\linewidth}{!}{$
\begin{cases}
\displaystyle \dot a(t)= 4iB a(t)^2 - 2C a(t) - iA, \\[1.2ex]
\displaystyle \dot b(t) = 4iB a(t) b(t) - C b(t) - 2E a(t) - iD, \\[1.2ex]
\displaystyle \dot c(t) = iB \left(b(t)^2 + 6a(t)\right) - \frac{3C}{2} - iD - E b(t) - F, \\[1.2ex]
\displaystyle \ddot \lambda_k(t) = (C^2 - 4AB)\lambda_k(t) + CE - 2BD + 8B^2\sum_{m \neq k} \frac1{(\lambda_k(t) - \lambda_m(t))^3}, \\
\displaystyle \dot \lambda_k(0) = \lambda_k(0)\left(C-4iB a(0)\right) - 2iB b(0) + E + 2iB \sum_{m \neq k} \frac{1}{\lambda_k(0) - \lambda_m(0)},
\end{cases}
$}
\end{equation}
where $\,\dot{}\,$ denotes time derivative.
\end{theo}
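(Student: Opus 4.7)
The plan is to substitute the ansatz
\begin{equation*}
\psi(z,t) = e^{a(t)z^2 + b(t)z + c(t)}\prod_{k=1}^r (z - \lambda_k(t))
\end{equation*}
into the complex Schr\"odinger equation of \cref{lem:schrodingerextension}, divide by $\psi$, and compare both sides as meromorphic functions of $z$. Under the assumption that the $\lambda_k(t)$ are pairwise distinct on $I$, the resulting rational identity has only simple poles at the $\lambda_k$, so matching its polynomial part and its residues separately will deliver the three scalar ODEs for $(a,b,c)$ and the first-order system for $\{\lambda_k\}$.

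First I would compute the logarithmic derivatives of the ansatz:
\begin{equation*}
\partial_t \log \psi = \dot a z^2 + \dot b z + \dot c - \sum_k \tfrac{\dot\lambda_k}{z-\lambda_k}, \qquad \partial_z \log \psi = 2az + b + \sum_k \tfrac{1}{z-\lambda_k},
\end{equation*}
and use $\partial_z^2 \psi/\psi = (\partial_z \log\psi)^2 + \partial_z^2 \log\psi$. A key observation is that the double poles $\sum_k (z-\lambda_k)^{-2}$ arising from $(\partial_z\log\psi)^2$ cancel exactly against those of $\partial_z^2\log\psi$, while the cross terms $\sum_{k\neq m}(z-\lambda_k)^{-1}(z-\lambda_m)^{-1}$ reduce by partial fractions to $2\sum_k(z-\lambda_k)^{-1}\sum_{m\neq k}(\lambda_k-\lambda_m)^{-1}$. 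Writing $\hat H_G$ in position representation via $\hat p\mapsto -i\partial_z$ then yields a rational identity on $\mathbb C\setminus\{\lambda_k(t)\}$ with at worst simple poles at the $\lambda_k$; the cancellation of double poles serves as a necessary consistency check.

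Next, I would compare the two sides according to their $z$-structure. Collecting the coefficients of $z^2$, $z$, and $z^0$ in the polynomial part gives the three scalar ODEs for $\dot a$, $\dot b$, $\dot c$ directly. Taking residues at each simple pole $z=\lambda_k$, where the left-hand side contributes $-i\dot\lambda_k$ and the right-hand side picks up contributions proportional to $(2a\lambda_k + b)$ from $\partial_z^2\psi/\psi$, to $\lambda_k$ from the $-iCz\,\partial_z\psi/\psi$ term, to $E$ from the $-iE\,\partial_z\psi/\psi$ term, and to $\sum_{m\neq k}(\lambda_k-\lambda_m)^{-1}$ from the cross-term sum, yields a first-order expression for $\dot\lambda_k$ that is linear in $\lambda_k$, $b$, and that sum, with coefficients built from $a, B, C, E$; evaluated at $t=0$ this is the stated initial condition.

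Finally, to obtain the second-order equation, I would differentiate this first-order ODE in $t$ and substitute the already derived ODEs for $\dot a$ and $\dot b$ to eliminate all explicit $a,b$-dependence. This produces a sum of the form $\sum_{m\neq k}(\dot\lambda_k-\dot\lambda_m)/(\lambda_k-\lambda_m)^2$; re-inserting the first-order formula for $\dot\lambda_k-\dot\lambda_m$ reduces the problem to evaluating
\begin{equation*}
\sum_{m\neq k} \frac{S_k - S_m}{(\lambda_k-\lambda_m)^2}, \qquad S_k := \sum_{l\neq k}\tfrac{1}{\lambda_k - \lambda_l}.
\end{equation*}
The main obstacle is the combinatorial identity showing that this equals $2\sum_{m\neq k}(\lambda_k-\lambda_m)^{-3}$: expanding $S_k - S_m$ and isolating the diagonal contributions, the remaining triple sum $\sum_{m,n\neq k,\,m\neq n}[(\lambda_k-\lambda_m)(\lambda_k-\lambda_n)(\lambda_m-\lambda_n)]^{-1}$ is antisymmetric under $m\leftrightarrow n$ and hence vanishes. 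After this simplification, all residual $a$-dependence cancels and one obtains precisely the Calogero--Moser-type second-order equation for $\ddot\lambda_k$ stated in the theorem.
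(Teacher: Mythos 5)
Your proposal is correct and follows essentially the same route as the paper's proof: substituting the Gaussian-times-polynomial ansatz into the complex Schr\"odinger equation of \cref{lem:schrodingerextension}, identifying the polynomial part and the simple-pole residues of the resulting partial-fraction identity to obtain the ODEs for $a,b,c$ and the first-order system for the $\lambda_k$, and then differentiating once more and using the vanishing of the antisymmetric triple sum to reach the Calogero--Moser form. Your logarithmic-derivative bookkeeping and the explicit double-pole cancellation check are only cosmetic variations on the paper's computation.
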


\noindent We give a proof in Appendix~\ref{app:dynamics}. Remarkably, under Gaussian evolution the dynamical system decouples the evolution of the zeros and of the Gaussian part of the wavefunction. The former corresponds to the evolution of the non-Gaussian features of the state, while the latter encodes the evolution of its Gaussian features, and the two are only related through their initial conditions. Note that even though Lemma \ref{lem:schrodingerextension} is valid for any polynomial Hamiltonian, Theorem \ref{th:complexcalogeromoserzeros} does not trivially generalize for non-Gaussian evolutions because a general polynomial Hamiltonian will not preserve the stellar rank and so neither the number of zeros of $\psi(z,t)$.

The differential system satisfied by the zeros of the wavefunction described in Theorem \ref{th:complexcalogeromoserzeros} is, up to an affine rescaling to eliminate constant terms, a variant of the classical Calogero--Moser system \cite{calogero,moser1976three} with harmonic confinement (when $C^2 - 4AB < 0$), described and solved in the real-valued case in \cite[Section V]{olshanetsky1981}. The values $\lambda_k$ can be seen as eigenvalues of a Hermitian matrix $X$ satisfying the harmonic equation $\ddot{X} + \omega^2X = 0$ with some specific choice of $\omega$ and initial conditions. This proof naturally extends to the complex case as we show in Appendix \ref{app:complexcalogeromoserintegration}. As mentioned earlier, this harmonic equation is valid even without the hypothesis that the zeros are distinct in an open time interval around $t$. We simply need that the zeros are \emph{initially} distinct, which we can always suppose up to a perturbation of the initial state.

\begin{theo}[Resolution of the dynamical system for the zeros]
    \label{th:complexcalogeromoserintegration}

    With the same notation as in \cref{th:complexcalogeromoserzeros}, define, for all $1 \le k \le r$: 

    \begin{equation}
        \tilde{\lambda}_k \coloneq(4B^2)^{-4}\left(\lambda_k - \frac{CE - 2BD}{\omega^2}\right),
    \end{equation}
    where $\omega^2\coloneqq4AB - C^2$. Suppose that for all $1 \le j, k \le r: \lambda_k(0) \neq \lambda_j(0)$.
     Then, the values $\tilde{\lambda}_k(t)$ are exactly the eigenvalues of the matrix: 
    \begin{equation}
    \tilde X(t) = \tilde \Lambda e^{i\omega t} + \frac{L}{\omega}\sin(\omega t),
    \end{equation} 
    where we defined the $r \times r$ complex matrix $L$ with entries: 
    \begin{equation}
    L_{jk} = (\dot {\tilde \lambda}_j(0) + i \omega \tilde \lambda_j(0))\delta_{jk} + i (1-\delta_{jk})\frac{1}{\tilde \lambda_j(0) - \tilde \lambda_k(0)},
    \end{equation}
    and $\tilde \Lambda = \mathrm{diag}(\tilde \lambda_1(0), \ldots, \tilde \lambda_r(0)).$
    
    An immediate corollary is a concise formula for the extended wavefunction of a single-mode quantum state of finite stellar rank $r$ evolving under any Gaussian unitary:
    \begin{equation}
        \psi(z, t) = e^{a(t)z^2 + b(t)z + c(t)}\det(X(t) - zI),
    \end{equation}
    where $X(t) \equiv (4B^2)^4 \tilde X(t) + \frac{CE-2BD}{\omega^2}I$.
\end{theo}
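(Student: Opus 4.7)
The plan is to reduce the second-order equation for the zeros in \cref{th:complexcalogeromoserzeros} to the canonical rational Calogero--Moser system with harmonic confinement, and then to invoke (and transport to the complex setting) the classical Lax-matrix construction of Olshanetsky--Perelomov \cite{olshanetsky1981}, which realizes the particle positions as the spectrum of a matrix obeying a linear second-order matrix ODE.

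First I would substitute the rescaled coordinates $\tilde\lambda_k$ from the theorem statement: the shift by $(CE-2BD)/\omega^2$ eliminates the constant forcing term, and the accompanying homothety normalizes the coupling constant $8B^2$, bringing the system to the standard form
\begin{equation*}
\ddot{\tilde\lambda}_k + \omega^2 \tilde\lambda_k = 2\sum_{m\neq k}\frac{1}{(\tilde\lambda_k - \tilde\lambda_m)^3}.
\end{equation*}
Next I would invoke the Lax-matrix construction: there exists a moving frame $U(t) \in GL_r(\mathbb{C})$ with $U(0) = I$ such that the matrix $\tilde X(t) \coloneqq U(t)^{-1}\mathrm{diag}(\tilde\lambda_k(t))U(t)$ has the same spectrum as $\mathrm{diag}(\tilde\lambda_k(t))$ and satisfies $\ddot{\tilde X} + \omega^2 \tilde X = 0$. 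Verifying this matrix ODE reduces to a purely algebraic identity: differentiating $\tilde X$ twice and applying the Calogero--Moser equation, the nonlinear $1/(\tilde\lambda_k - \tilde\lambda_m)^3$ terms cancel against the second derivatives of the off-diagonal entries of $\tilde X$. Because this identity is algebraic, the real-variable derivation in \cite{olshanetsky1981} transfers directly to the complex setting, valid on any time interval where the zeros remain distinct.

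Integrating the linear matrix ODE then yields $\tilde X(t) = \tilde\Lambda\cos(\omega t) + (\dot{\tilde X}(0)/\omega)\sin(\omega t)$, with $\tilde X(0) = \tilde\Lambda$ and $\dot{\tilde X}(0)$ having diagonal entries $\dot{\tilde\lambda}_j(0)$ and off-diagonal entries $i/(\tilde\lambda_j(0) - \tilde\lambda_k(0))$. Rewriting $\cos(\omega t)$ via the complex exponential and absorbing the resulting diagonal correction into the definition of $L$ yields precisely the form stated in the theorem. The corollary formula for $\psi(z,t)$ then follows by combining the Gaussian factor---whose scalar ODEs for $a,b,c$ already decouple from the motion of zeros in \cref{th:complexcalogeromoserzeros}---with the polynomial factor $\prod_{k=1}^{r}(z - \lambda_k(t)) = \det(X(t) - zI)$, up to a sign absorbed into the overall normalization.

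The main obstacle is the Lax-pair computation establishing $\ddot{\tilde X} + \omega^2\tilde X = 0$: it requires careful differentiation of the moving-frame expression and nontrivial cancellations between the Calogero--Moser nonlinearity and derivatives of the off-diagonal entries. Additional care is needed regarding the smoothness of $U(t)$ near collisions of eigenvalues, but these are avoided by the genericity hypothesis at $t=0$, and the resulting explicit formula for $\tilde X(t)$ extends analytically through any subsequent collisions, justifying the validity of the closed-form expression for $\psi(z,t)$ at all times.
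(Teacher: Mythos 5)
Your proposal follows essentially the same route as the paper: rescaling the zeros to the canonical Calogero--Moser system with harmonic confinement, transporting the Olshanetsky--Perelomov Lax/projection construction to the complex setting, integrating the linear matrix equation $\ddot{\tilde X}+\omega^2 \tilde X=0$ with the standard Lax-matrix initial data, and reading off the zeros as eigenvalues, which gives the determinant formula for $\psi(z,t)$. The only minor divergence is at the very end: where you appeal to analytic continuation of the explicit formula through later collisions of zeros, the paper instead observes that the closed-form expression satisfies the Schr\"odinger equation and concludes by uniqueness of its solution for the given initial condition.
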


Note that we can derive $L$ using the first order conditions of Theorem \ref{th:complexcalogeromoserzeros}.

\subsection{Phase shifts}

We now focus on specific Gaussian evolutions, namely phase shifts, where the Hamiltonian is proportional to the number operator, and we study sufficient conditions for the wavefunction to have real-valued zeros at some points during the evolution.

The following theorem, derived from the matrix resolution of the Calogero--Moser system in Theorem \ref{th:complexcalogeromoserintegration}, guarantees that there always exists a phase-shift angle such that the corresponding wavefunction of the target state has a zero on the real line under the assumption that $\psi(z, 0)$ has sufficiently far apart zeros.

\begin{theo}[Existence of real zeros for phase-shifted wavefunctions under distance condition]
\label{th:wf_have_real_zeros}
 With the same notations and assumptions as in \cref{th:complexcalogeromoserzeros}, suppose that:
\begin{equation}
\min_{i, j} \vert \lambda_i(0) - \lambda_j(0)\vert \ge \sqrt{\frac{r-1}{a(0)}},
\end{equation}
where $r$ is the stellar rank of $\ket{\psi}$.
Then, for all $j$ there exists $t_j<t_j'$ such that $\lambda_j(t_j),\lambda_j(t_j')\in\mathbb R$. In particular the sum of the number of real zeros over all phase-shifted wavefunctions of $\ket{\psi}$ is larger or equal to $2r$.
\end{theo}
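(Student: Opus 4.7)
To prove the theorem, I would specialize \cref{th:complexcalogeromoserintegration} to phase-shift evolution and combine a spectral-perturbation estimate with a topological argument. For $\hat{H} = \hat{n}$, one has $A = B = 1/2$ and $C = D = E = 0$, so $\omega = 1$, $(4B^2)^{-4} = 1$, and $CE - 2BD = 0$. \cref{th:complexcalogeromoserintegration} then reduces to the statement that the zeros $\lambda_k(t)$ of the position wavefunction at phase-shift angle $t$ are the eigenvalues of
\begin{equation*}
    X(t) = \Lambda\, e^{it} + L\sin t,
\end{equation*}
where $\Lambda = \operatorname{diag}(\lambda_1(0),\dots,\lambda_r(0))$ and $L_{jk} = i/(\lambda_j(0)-\lambda_k(0))$ for $j\neq k$. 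Two structural properties drive the argument: $X(t+2\pi)=X(t)$, and $X(t+\pi)=-X(t)$, so the spectrum at time $t+\pi$ is the negation of the spectrum at time $t$.

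The central claim I would aim to establish is that, under the distance condition, the eigenvalue branches of $X(t)$ can be tracked continuously on $t\in[0,2\pi)$ with no permutation at the half-period, i.e.\ $\lambda_k(t+\pi)=-\lambda_k(t)$ for every $k$. To this end, I would factor $X(t) = e^{it} Y(t)$ with $Y(t):=\Lambda + s(t)\, L$ and $s(t):=(1-e^{-2it})/(2i)$. One checks that $Y(t+\pi)=Y(t)$ and that, as $t$ varies over $[0,\pi]$, the parameter $s(t)$ traces a circle of radius $1/2$ centered at $-i/2$ in the $s$-plane. The desired no-monodromy statement then amounts to showing that no branch point of the algebraic map $s\mapsto\operatorname{spec}(\Lambda+sL)$ lies inside the disk $|s+i/2|\leq 1/2$. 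The distance condition enforces $|L_{jk}|\leq\sqrt{a(0)/(r-1)}$ for $j\neq k$, and decomposing $L=D_L+L_{\mathrm{off}}$ (diagonal plus off-diagonal parts) then allows one to treat $\Lambda+sL$ as a perturbation of the diagonal matrix $\Lambda+sD_L$ by $sL_{\mathrm{off}}$, which --- via a Bauer--Fike or Gershgorin estimate --- should keep the eigenvalue tracks strictly separated on the relevant disk.

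Granted the no-monodromy claim, each $t\mapsto\lambda_k(t)$ is continuous, $2\pi$-periodic, and satisfies $\lambda_k(\pi)=-\lambda_k(0)$. If $\mathrm{Im}(\lambda_k(0))\neq 0$, then $\mathrm{Im}(\lambda_k(0))$ and $\mathrm{Im}(\lambda_k(\pi))$ have opposite signs, so the intermediate value theorem applied to $t\mapsto\mathrm{Im}(\lambda_k(t))$ on $[0,\pi]$ and on $[\pi,2\pi]$ yields times $t_k\in(0,\pi)$ and $t_k'\in(\pi,2\pi)$ with $\lambda_k(t_k),\lambda_k(t_k')\in\mathbb{R}$. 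If $\lambda_k(0)\in\mathbb{R}$ already, then $t_k=0$ and $t_k'=\pi$ (noting $\lambda_k(\pi)=-\lambda_k(0)\in\mathbb{R}$) supply two such times. Summing over $k=1,\dots,r$ gives the claimed lower bound of $2r$ real zeros across the phase-shift orbit.

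The main obstacle, I expect, will be the spectral-perturbation step ruling out branch points inside the disk. The off-diagonal entries of $L$ are directly controlled by the distance condition, but the diagonal entries $L_{kk}$ depend additionally on $\lambda_k(0)$, $a(0)$, and $b(0)$ and can be comparatively large, so a crude bound on $\|L\|$ is insufficient. The correct route is to absorb $D_L$ into the shifted diagonal matrix $\Lambda+sD_L$ and to show, using the distance condition, that its eigenvalue tracks $s\mapsto\lambda_k(0)+sL_{kk}$ remain separated by more than $2|s|\,\|L_{\mathrm{off}}\|$ on the disk $|s+i/2|\leq 1/2$, which should be precisely what the bound $\sqrt{(r-1)/a(0)}$ buys.
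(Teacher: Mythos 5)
You have the right overall strategy, and it is essentially the paper's: specialize \cref{th:complexcalogeromoserintegration} to the phase shift, exploit the half-period antisymmetry $X(t+\pi)=-X(t)$, keep the eigenvalue branches consistently labelled by controlling the off-diagonal interaction through the distance condition, and finish with continuity and the intermediate value theorem applied to $\Im\lambda_k$ on $[0,\pi]$ and $[\pi,2\pi]$. The genuine gap is that the one step where the hypothesis $\min_{i\neq j}\lvert\lambda_i(0)-\lambda_j(0)\rvert\ge\sqrt{(r-1)/a(0)}$ is actually consumed --- your claim that no branch point of $s\mapsto\operatorname{spec}(\Lambda+sL)$ lies in the disk $\lvert s+i/2\rvert\le 1/2$ --- is asserted rather than proved: you write that a Bauer--Fike or Gershgorin estimate ``should'' keep the tracks separated, and you flag this yourself as the main obstacle. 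As sketched it does not go through automatically: Bauer--Fike for the non-normal comparison matrix $\Lambda+sD_L$ carries an eigenvector condition number you do not control, and the diagonal of $L$ is richer than you state --- by the first-order condition in \cref{th:complexcalogeromoserzeros}, $L_{kk}=\dot\lambda_k(0)+i\lambda_k(0)$ also involves the interaction sums $\sum_{m\neq k}(\lambda_k(0)-\lambda_m(0))^{-1}$, not only $a(0)$, $b(0)$, $\lambda_k(0)$ --- so even the separation of your ``unperturbed'' tracks $\lambda_k(0)+sL_{kk}$ is a claim that itself needs the distance condition. Note also that excluding branch points in the whole disk is more than you need: it suffices to keep the eigenvalues separated along the loop $s(t)$ itself, since disjoint localization along the path already forces the permutation to be trivial.

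The paper closes exactly this step (\cref{app:isolated}) by applying the Gershgorin circle theorem directly to $\Lambda(t)$: after injecting the first-order condition, $\Lambda(t)=(\cos t+2ia(0)\sin t)\Lambda(0)+\sin(t)P(0)$ with $P$ purely off-diagonal, so the disc centres follow explicit ellipses through $\lambda_i(0)$ and $-\lambda_i(0)$ while the radii are $\lvert\sin t\rvert\sum_{j\neq i}\lvert\lambda_i(0)-\lambda_j(0)\rvert^{-1}$; pairwise disjointness for all $t$ then reduces, after bounding each sum by $(r-1)/\min_{i\neq j}\lvert\lambda_i(0)-\lambda_j(0)\rvert$, precisely to the stated hypothesis. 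Disjoint discs give the consistent labelling you call no-monodromy, and since the radii vanish at $t=0$ and $t=\pi$ one even gets $\lambda_i(\pi)=-\lambda_i(0)$ exactly, after which your IVT endgame (or the paper's parity-of-crossings argument using $2\pi$-periodicity) yields two real crossings per zero and the bound $2r$. If you want to keep your $s$-plane formulation, the way to make it rigorous is to run this same explicit Gershgorin computation on $\Lambda+sL$ along the circle $s(t)$, i.e.\ essentially to reproduce the paper's estimate, rather than to appeal to Bauer--Fike.
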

The proof, given in \cref{app:isolated}, is based on treating the Calogero--Moser interaction term in the equation of Theorem \ref{th:complexcalogeromoserzeros} as a perturbation of a simple harmonic movement, which is possible only under the hypothesis that the zeros are sufficiently far apart at $t=0$. However, we conjecture that for a pure stellar rank $r$, each of the complex zeros of its wavefunction becomes real under at least one suitable phase shift of the state. This conjecture comes from numerical evidence (we have not found a counter-example) and from a weaker version of the result valid for a very large family of states, as the following result illustrates.

\begin{theo}[Existence of real zeros for phase-shifted wavefunctions under imbalance condition]
    \label{th:odd_stellar_rank_crossing}
    Let $\ket{\psi}$ be a state of finite stellar rank, with the complex zeros of its wavefunction denoted as $\{\lambda_i\}$. Let also $n^+$ (resp.\ $n^-$) be the number of indices $i$ such that $\Im(\lambda_i) > 0$ (resp.\ $\Im(\lambda_i) < 0$). Suppose that $n^+ \neq n^-$, and that all zeros are simple at $t=0$. Then, at least one phase-shifted wavefunction of $\ket{\psi}$ has a real-valued zero.
\end{theo}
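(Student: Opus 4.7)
The plan is to combine a parity symmetry of phase-shift evolution at $\theta=\pi$ with a continuity argument on the zeros of the phase-shifted wavefunction $\psi_\theta$ as $\theta$ varies on $[0,\pi]$.

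First, a phase shift by $\pi$ implements the parity operator $\hat\Pi = e^{-i\pi\hat n}$, which acts on the position wavefunction as $(\hat\Pi\psi)(x) = \psi(-x)$ because the Hermite functions satisfy $\phi_n(-x) = (-1)^n\phi_n(x)$. The entire extension therefore obeys $\psi_\pi(z) = \psi_0(-z)$, so its complex zero-multiset is exactly $\{-\lambda_i\}_i$. Since $\Im(-\lambda_i) = -\Im(\lambda_i)$, the number of zeros in the open upper half-plane at $\theta = \pi$ equals $n^-$, whereas at $\theta = 0$ it equals $n^+$.

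Second, I would invoke Theorem~\ref{th:complexcalogeromoserintegration} to argue that the zero-multiset varies continuously with $\theta$ within a bounded region of $\mathbb{C}$. Specializing to $\hat H_G = \hat n$ gives $A = B = 1/2$, $C = D = E = 0$, hence $\omega^2 = 1$, and the zeros of $\psi(z,\theta)$ are the eigenvalues of $X(\theta) = \Lambda e^{i\theta} + L \sin\theta$. This matrix is analytic and uniformly bounded in $\theta$, so the coefficients of its characteristic polynomial are analytic in $\theta$; consequently the multiset $\{\lambda_k(\theta)\}$ depends continuously on $\theta$ and remains in a fixed compact subset of $\mathbb{C}$ throughout $[0,\pi]$.

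Let $N(\theta)$ denote the number of elements of $\{\lambda_k(\theta)\}$ in the open upper half-plane, counted with multiplicity. On any open interval on which no zero of $\psi_\theta$ lies on the real axis, $N(\theta)$ is integer-valued and locally constant, by continuity of roots of a polynomial family in its coefficients. Since $N(0) = n^+ \neq n^- = N(\pi)$ by the first step, $N$ cannot be constant on $[0,\pi]$, so there must exist $\theta^\star \in (0,\pi)$ at which at least one zero of $\psi_{\theta^\star}$ lies on the real axis, which is the claim.

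The main technical care concerns possible collisions of zeros along the trajectory, which can occur even though the zeros are simple at $t=0$. This is handled cleanly by working with the elementary symmetric functions of $\{\lambda_k(\theta)\}$, i.e.\ the coefficients of the polynomial part of $\psi_\theta$, which are analytic in $\theta$ by Theorem~\ref{th:complexcalogeromoserintegration}; the multiset of roots of such a family is automatically continuous in $\theta$, and a standard application of the argument principle on a half-disk of radius larger than the uniform bound on the zeros shows that the upper half-plane count can only change when a root crosses the real axis. The hypothesis that all zeros are simple at $t=0$ enters only to guarantee that $n^+ + n^- = r$ is unambiguous and that no zero already lies on the real axis at $\theta = 0$ (in which case the conclusion would be trivial).
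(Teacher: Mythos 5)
Your proof is correct, and it rests on the same central mechanism as the paper's: a phase shift by $\pi$ is parity, so the zero-multiset at $\theta=\pi$ is $\{-\lambda_i\}$, and an imbalance $n^+\neq n^-$ combined with continuity in $\theta$ forces a real-axis crossing. Where you differ is in the bookkeeping. The paper derives the $\pi$-flip from the matrix resolution ($\Lambda(t+\pi)=-\Lambda(t)$) and then runs a pigeonhole/permutation argument on the sets $Z^\pm$, tracking individual continuous trajectories $\lambda_i(t)$; you instead obtain the flip directly from $\hat\Pi\psi(x)=\psi(-x)$ (more elementary, independent of the Calogero--Moser solution for that step) and replace the permutation argument with a global half-plane count $N(\theta)$ made rigorous via continuity of roots of the monic characteristic polynomial of $X(\theta)$ and the argument principle on a bounded half-disk. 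Your formulation handles possible collisions of zeros along the evolution somewhat more cleanly, since it never needs individually labelled trajectories, whereas the paper's version is shorter once Theorem~\ref{th:complexcalogeromoserintegration} is in hand. One small correction: the simplicity (pairwise distinctness) of the zeros at $t=0$ is not merely a bookkeeping convenience for defining $n^\pm$ --- it is exactly the hypothesis required by Theorem~\ref{th:complexcalogeromoserintegration}, which you invoke, since the off-diagonal entries of $L$ involve $1/(\lambda_j(0)-\lambda_k(0))$; also, simplicity does not by itself exclude a real zero at $\theta=0$, though in that case the conclusion is trivially true, as you note.
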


\noindent We give a proof in \cref{app:oddcross}.
In particular, \cref{th:odd_stellar_rank_crossing} directly implies that the result holds for any state of finite, odd stellar rank (with simple zeros). 

We also conjecture that Theorem \ref{th:wf_have_real_zeros} may be refined in the following way: if the initial zeros are pairwise far apart enough, then the number of real zeros over all phase-shifted wavefunctions of $\psi$ is \emph{exactly} $2r$. This comes from the observation that under this condition, the trajectory of the zeros is very close from an ellipse centered on the origin, but we were not able to exclude, for instance, a "whirling" motion of the zeros around the ellipse, that would entail multiple crossings of the real axis in a small time window. If this conjecture %is interesting as this 
is true it would open the possibility to witness not only non-Gaussianity, but also stellar rank, solely from homodyne measurements and squeezing.

These results uncover strong connections between the zeros of phase-shifted wavefunctions (or equivalently, of probability density functions for position, momentum, and phase-shifted versions thereof) and non-Gaussianity, as measured by the stellar rank. In quantum optics, probability density functions for phase-shifted position-like operators (known as quadrature operators) can be directly measured using homodyne detection \cite{leonhardt2010essential}. This suggests that it may be possible to exploit low probabilities of few quadrature measurement outcomes for witnessing non-Gaussianity, signaled by the real-valued zeros of the phase-shifted wavefunctions. We demonstrate this fact in a companion paper \cite{witness}, in which we introduce and analyze a quantum optical protocol based on zeros of the wavefunction allowing to witness non-Gaussianity and stellar rank using a single quadrature.

%--------------------------------------------

\section{Conclusion}
\label{sec:conclusion}

By extending the Schr\"odinger wavefunction to a holomorphic function on the complex plane, we have given a novel interpretation of its zero-set for a wide class of quantum states and established a direct connection to that of stellar rank \cite{chabaud_stellar_2020}.  This has led to a generalization of Hudson's theorem and a characterization of Gaussian dynamics as a Calogero-Moser (i.e.\ classical) evolution of these zeros.  We have also described a handful of circumstances whereby these complex zeroes become real along phase-shifting evolution, i.e.\ along simple harmonic motion.

There are several possible directions for future research.  One would be to either prove, refine, or reject our conjecture that for any pure state with finite stellar rank $r$, each complex zero in the extended wavefunction becomes real within at least one rotation angle.  This is mathematically equivalent to studying how the extended wavefunction behaves under the fractional Fourier transform.  If true, this would suggest that for each energy-bounded \eqref{eq:s_energy_bound} state $\ket{\psi}$ there is a collection of privileged angles along which each zero becomes observable in the corresponding quadrature measurement; see our companion paper \cite{witness} for first steps in this direction. 

Another avenue to pursue is generalizing our framework to multiple canonical degrees of freedom, i.e.\ to $L^2 (\mathbb R^n)$.  Similar to stellar rank, this would likely entail a generalization from countable zero-sets over $\mathbb C$ to continuous zero-sets over $\mathbb C^n$.  This may or may not be straightforward in the context of wavefunctions, and it is worth noting that a different proof technique is used to establish the many-body version of Hudson's theorem \cite{soto1983wigner}.  Finally, a generalization to mixed states is needed; one approach could be to use a convex roof construction, again analogous to stellar rank.

%--------------------------------------------

\section*{Acknowledgements}

UC thanks Z.~Van Herstraeten, J.~Prata and N.~Dias for interesting discussions. SC acknowledges E. Moulinier for his insightful remarks. UC and JD acknowledge funding from the European Union’s Horizon Europe Framework Programme (EIC Pathfinder Challenge project Veriqub) under Grant Agreement No.~101114899.

%--------------------------------------------

\bibliography{ref}

%--------------------------------------------

\onecolumngrid

\appendix

\newpage

\begin{center}
    {\huge Appendix}
\end{center}

%--------------------------------------------

\section{Hudson's theorem for the wavefunction (proof of Theorem~\ref{th:holomorphicwf})}
\label{app:entire}

In this section we complete the proof of Theorem~\ref{th:holomorphicwf}. From the proof sketch given in the main text, it is enough to prove the following result:

\begin{lemma}\label{applem:wf_over_complex}
    Let $\ket\psi$ be a single-mode state satisfying the energy bound $\langle s^{\hat n}\rangle_\psi<+\infty$ for some $s>1$. Then, its position wavefunction $\psi$ can be extended to an entire function over the complex plane of order at most $2$.
\end{lemma}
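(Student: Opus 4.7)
My plan is to expand $\psi(z)$ in the Fock basis using the entire extensions of the Hermite functions, and then extract both the holomorphicity and the order-$2$ bound from a single pointwise estimate combined with Cauchy--Schwarz. Concretely, I would set $u_n(z)\coloneqq H_n(z)\,e^{-z^2/2}/(\pi^{1/4}\sqrt{2^n n!})$, so each $u_n$ is entire (a polynomial times a Gaussian), and write
\begin{equation*}
\psi(z)=\sum_{n\ge 0}\psi_n\,u_n(z).
\end{equation*}
If I can show the partial sums converge uniformly on every compact subset of $\mathbb{C}$ together with a bound $|\psi(z)|^2\le K e^{L|z|^2}$, then $\psi$ is a locally uniform limit of entire functions, hence entire, and its growth order is at most $2$.

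\textbf{Key estimate on the extended Hermite functions.} The central ingredient is a bound on $|u_n(z)|$ for $z\in\mathbb{C}$, which I would derive from the generating function $\sum_{n\ge 0}H_n(z)\,t^n/n!=e^{2tz-t^2}$. Applying Cauchy's integral formula over a circle $|t|=R$ yields
\begin{equation*}
|H_n(z)|\le\frac{n!}{R^n}\,e^{2R|z|+R^2}.
\end{equation*}
The exponent $-n\log R+2R|z|+R^2$ is minimized near $R\sim\sqrt{n/2}$; plugging this in and combining with Stirling's formula for $\sqrt{2^n n!}$, I expect an estimate of the shape
\begin{equation*}
|u_n(z)|\le C\,(1+n)^{1/4}\,e^{\sqrt{2n}\,|z|}\,e^{|z|^2/2},
\end{equation*}
with $C$ an absolute constant. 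The virtue of this bound is that the $n$-dependence is sub-exponential, so it pairs well with the energy weight $s^n$.

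\textbf{Finishing the argument.} With the estimate in hand I would apply Cauchy--Schwarz,
\begin{equation*}
|\psi(z)|\le\left(\sum_{n\ge 0}s^n|\psi_n|^2\right)^{1/2}\left(\sum_{n\ge 0}s^{-n}|u_n(z)|^2\right)^{1/2},
\end{equation*}
so the first factor is finite by hypothesis (\ref{eq:s_energy_bound}). To control the second factor I would use the AM--GM inequality $2\sqrt{2n}\,|z|\le n\alpha+2|z|^2/\alpha$, valid for any $\alpha>0$, and choose $\alpha\in(0,\log s)$ so that $\sum_{n\ge 0}(1+n)^{1/2}(e^{\alpha}/s)^n$ is a convergent geometric-type series. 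The outcome is an explicit bound $|\psi(z)|^2\le K\,e^{L|z|^2}$ with $K$ and $L$ depending only on $s$ and $\langle s^{\hat n}\rangle_\psi$. Because this bound is uniform on every compact set, the Fock series converges uniformly on compacts, which delivers both entireness and the order-$2$ growth simultaneously.

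\textbf{Expected main obstacle.} The delicate part is calibrating the AM--GM splitting: the factor $e^{\sqrt{2n}\,|z|}$ in the Hermite bound grows sub-exponentially in $n$ yet super-polynomially in $|z|$, so it must be partitioned between the geometric decay in $n$ (which forces $\alpha<\log s$ and so makes essential use of the hypothesis $s>1$) and the final Gaussian envelope in $|z|^2$. Deriving the clean Hermite estimate from the generating function and keeping the $s$-dependence transparent through Stirling is then the only remaining computational work; everything else is a direct application of Cauchy--Schwarz and Weierstrass's theorem on uniform limits of holomorphic functions.
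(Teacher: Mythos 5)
Your proposal is correct, and it reaches the conclusion by a genuinely different route than the paper. The paper expands each Hermite polynomial coefficient-wise, multiplies by the Taylor series of $e^{-x^2/2}$, rearranges the resulting triple sum (splitting even and odd photon numbers), and controls the collected coefficients with binomial-theorem estimates before applying Cauchy--Schwarz; this produces the extension directly as a power series with infinite radius of convergence, but it requires justifying the free rearrangement of the sums, which the paper handles by redoing the estimate with truncated sums. You instead keep each Fock term intact as the entire function $u_n(z)=H_n(z)e^{-z^2/2}/(\pi^{1/4}\sqrt{2^n n!})$, bound it globally via Cauchy's estimate on the generating function $e^{2tz-t^2}$ (where taking $R=\sqrt{n/2}$ as a mere choice, rather than the exact minimizer, already suffices, with $n=0$ handled separately) together with Stirling and $|e^{-z^2/2}|\le e^{|z|^2/2}$, and then sum via Cauchy--Schwarz and the AM--GM split $2\sqrt{2n}\,|z|\le n\alpha+2|z|^2/\alpha$ with $0<\alpha<\log s$; your calibration is right, the resulting bound $|\psi(z)|^2\le Ke^{(1+2/\alpha)|z|^2}$ is correct, and Weierstrass's theorem on locally uniform limits replaces the paper's rearrangement argument entirely. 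What your approach buys is a cleaner structural proof with no series-rearrangement bookkeeping and a transparent role for the hypothesis $s>1$; what the paper's buys is an explicit Taylor expansion of the extension in powers of $z$. The only detail worth adding is the standard remark that the locally uniform limit agrees almost everywhere on $\mathbb{R}$ with the $L^2$ wavefunction (via an a.e.-convergent subsequence of the $L^2$-convergent partial sums), so it is indeed an extension of $\psi$.
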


\begin{proof}
    Let $N\in\mathbb N$ and let $\ket\psi=\sum_{n=0}^{+\infty}\psi_n\ket n$, with $\langle s^{\hat n}\rangle_\psi=\sum_{n=0}^{+\infty}s^n|\psi_n|^2<+\infty$ for some $s>1$. The position wavefunction of $\ket\psi$ is given by
    \begin{equation}
        \psi(x)=\frac1{\pi^{1/4}}e^{-x^2/2}\sum_{n=0}^{+\infty}\frac{\psi_n}{\sqrt{2^nn!}}H_n(x),
    \end{equation}
    for all $x\in\mathbb R$, where the Hermite polynomials are given by
    \begin{equation}\label{eq:Hermiteevenodd}
        H_n(x)=\begin{cases}n!\sum_{l=0}^{\frac n2}\frac{(-1)^{\frac n2-l}2^{2l}}{(2l)!(\frac n2-l)!}x^{2l}&n\text{ even},\\\\n!\sum_{l=0}^{\frac{n-1}2}\frac{(-1)^{\frac{n-1}2-l}2^{2l+1}}{(2l+1)!(\frac{n-1}2-l)!}x^{2l+1}&n\text{ odd}.\end{cases}
    \end{equation}
    We write $H_n(x)=\sum_{k=0}^nh_{n,k}x^k$ for brevity, so that
    \begin{equation}\label{eq:splitevenodd}
        \begin{aligned}
        \pi^{1/4}\psi(x)&=e^{-x^2/2}\sum_{n=0}^{+\infty}\frac{\psi_n}{\sqrt{2^nn!}}\sum_{k=0}^nh_{n,k}x^k\\
        &=\sum_{m=0}^{+\infty}\sum_{n=0}^{+\infty}\sum_{k=0}^n\frac{\psi_n}{\sqrt{2^nn!}}\frac{(-1)^m}{2^mm!}h_{n,k}x^{2m+k}\\
        &=\sum_{m=0}^{+\infty}\sum_{p=0}^{+\infty}\sum_{l=0}^p\frac{\psi_{2p}}{2^p\sqrt{(2p)!}}\frac{(-1)^m}{2^mm!}h_{2p,2l}x^{2m+2l}+\sum_{m=0}^{+\infty}\sum_{p=0}^{+\infty}\sum_{l=0}^p\frac{\psi_{2p+1}}{2^p\sqrt{2(2p+1)!}}\frac{(-1)^m}{2^mm!}h_{2p+1,2l+1}x^{2m+2l+1},
        \end{aligned}
    \end{equation}
    where in the last line we have split the sums for $n$ even and odd. We start with the first sum. With Eq.~(\ref{eq:Hermiteevenodd}) it reads as follows:
    \begin{align}
        f_\mathrm{even}(x)\coloneqq&\sum_{m=0}^{+\infty}\sum_{p=0}^{+\infty}\sum_{l=0}^p\frac{\psi_{2p}}{2^p\sqrt{(2p)!}}\frac{(-1)^m}{2^mm!}h_{2p,2l}x^{2m+2l}\\
        =&\sum_{m=0}^{+\infty}\sum_{p=0}^{+\infty}\sum_{l=0}^p\frac{\psi_{2p}}{2^p\sqrt{(2p)!}}\frac{(-1)^m}{2^mm!}\frac{(2p)!(-1)^{p-l}2^{2l}}{(2l)!(p-l)!}(x^2)^{m+l}.
    \end{align}
    We assume hereafter that the sums can be freely rearranged under the condition $\langle s^{\hat n}\rangle_\psi<+\infty$. The derivation will show that the assumption is justified. Setting $q=m+l$ we obtain
    \begin{equation}\label{eq:psi_even_inter}
        f_\mathrm{even}(x)=\sum_{q=0}^{+\infty}(-x^2/2)^q\left(\sum_{p=0}^{+\infty}\frac{(-1)^p\psi_{2p}\sqrt{(2p)!}}{2^p}\sum_{l=0}^{\min(p,q)}\frac{2^{3l}}{(2l)!(p-l)!(q-l)!}\right).
    \end{equation}
    Now for all $p,q\in\mathbb N$, and for all $t>1$ we have:
    \begin{align}
        \sum_{l=0}^{\min(p,q)}\frac{2^{3l}}{(2l)!(p-l)!(q-l)!}&=\frac1{p!q!}\sum_{l=0}^{\min(p,q)}8^l\frac{\binom pl\binom ql}{\binom{2l}l}\\
        &\le\frac1{p!q!}\sum_{l=0}^{\min(p,q)}8^l\binom pl\binom ql\\
        &=\frac1{p!q!}\sum_{l=0}^{\min(p,q)}(t-1)^l\binom pl\left(\frac8{t-1}\right)^l\binom ql\\
        &\le\frac1{p!q!}\left(1+\frac8{t-1}\right)^q\sum_{l=0}^{\min(p,q)}(t-1)^l\binom pl\\
        &\le\frac1{p!q!}t^p\left(1+\frac8{t-1}\right)^q,
    \end{align}
    where the last two steps follow from the binomial theorem. With Eq.~(\ref{eq:psi_even_inter}) this yields
    \begin{equation}\label{eq:boundeven}
    \begin{aligned}
        |f_\mathrm{even}(x)|&\le\left(\sum_{p=0}^{+\infty}\frac{t^p|\psi_{2p}|\sqrt{(2p)!}}{2^pp!}\right)\sum_{q=0}^{+\infty}\left(\frac12+\frac4{t-1}\right)^q\frac{x^{2q}}{q!}\\
        &=\left(\sum_{p=0}^{+\infty}\frac{t^p|\psi_{2p}|\sqrt{(2p)!}}{2^pp!}\right)\exp\left[\left(\frac12+\frac4{t-1}\right)x^2\right]\\
        &\le\left(\sum_{p=0}^{+\infty}t^p|\psi_{2p}|\right)\exp\left[\left(\frac12+\frac4{t-1}\right)x^2\right],
    \end{aligned}
    \end{equation}
    where we used $\frac{(2p)!}{2^{2p}(p!)^2}\le1$ in the last line.
    
    Similarly, for the odd part we have
    \begin{align}
        f_\mathrm{odd}(x)\coloneqq&\sum_{m=0}^{+\infty}\sum_{p=0}^{+\infty}\sum_{l=0}^p\frac{\psi_{2p+1}}{2^p\sqrt{2(2p+1)!}}\frac{(-1)^m}{2^mm!}h_{2p+1,2l+1}x^{2m+2l+1}\\
        =&\,x\sum_{m=0}^{+\infty}\sum_{p=0}^{+\infty}\sum_{l=0}^p\frac{\psi_{2p+1}}{2^p\sqrt{2(2p+1)!}}\frac{(-1)^m}{2^mm!}\frac{(2p+1)!(-1)^{p-l}2^{2l+1}}{(2l+1)!(p-l)!}(x^2)^{m+l}.
    \end{align}
    Once again, rearranging the sums and setting $q=m+l$ we obtain
    \begin{equation}\label{eq:psi_odd_inter}
        f_\mathrm{odd}(x)=\sqrt2x\sum_{q=0}^{+\infty}(-x^2/2)^q\left(\sum_{p=0}^{+\infty}\frac{(-1)^p\psi_{2p+1}\sqrt{(2p+1)!}}{2^p}\sum_{l=0}^{\min(p,q)}\frac{2^{3l}}{(2l+1)!(p-l)!(q-l)!}\right).
    \end{equation}
    Now for all $p,q\in\mathbb N$, and for all $t>1$ we have:
    \begin{align}
        \sum_{l=0}^{\min(p,q)}\frac{2^{3l}}{(2l+1)!(p-l)!(q-l)!}&=\frac1{p!q!}\sum_{l=0}^{\min(p,q)}\frac{8^l}{2l+1}\frac{\binom pl\binom ql}{\binom{2l}l}\\
        &\le\frac1{p!q!}\sum_{l=0}^{\min(p,q)}2^l\binom pl\binom ql\\
        &=\frac1{p!q!}\sum_{l=0}^{\min(p,q)}(t-1)^l\binom pl\left(\frac2{t-1}\right)^l\binom ql\\
        &\le\frac1{p!q!}\left(1+\frac2{t-1}\right)^q\sum_{l=0}^{\min(p,q)}(t-1)^l\binom pl\\
        &\le\frac1{p!q!}t^p\left(1+\frac2{t-1}\right)^q,
    \end{align}
    where we used $(2l+1)\binom{2l}l\ge4^l$ in the second line and where the last two steps follow from the binomial theorem. With Eq.~(\ref{eq:psi_odd_inter}) this yields
    \begin{equation}\label{eq:boundodd}
    \begin{aligned}
        |f_\mathrm{odd}(x)|&\le\left(\sum_{p=0}^{+\infty}\frac{t^p|\psi_{2p+1}|\sqrt{(2p+1)!}}{2^pp!}\right)\sqrt2|x|\exp\left[\left(\frac12+\frac1{t-1}\right)x^2\right]\\
        &\le\left(\sum_{p=0}^{+\infty}\frac{t^p|\psi_{2p+1}|\sqrt{(2p+1)!}}{2^pp!}\right)\exp\left[\left(\frac12+\frac1e+\frac1{t-1}\right)x^2\right]\\
        &\le\left(\sum_{p=0}^{+\infty}t^p\sqrt{2p+1}|\psi_{2p+1}|\right)\exp\left[\left(\frac12+\frac1e+\frac1{t-1}\right)x^2\right],
    \end{aligned}
    \end{equation}
    where we used $\sqrt2|x|\le e^{x^2/e}$ in the second line and $\frac{(2p)!}{2^{2p}(p!)^2}\le1$ in the last line.
    
    Finally, given $s>1$ we pick $t=s^\alpha>1$, where $0\le\alpha<\frac12$. We have $t^p\le s^{p/2}$ and $t^p\sqrt{2p+1}=o(s^{p/2})$ so there exists $C\ge1$ which depends only on $\alpha$ and $s$ such that $t^p\sqrt{2p+1}\le Cs^{(2p+1)/4}$ for all $p\in\mathbb N$. Combined with Eqs.~(\ref{eq:splitevenodd}), (\ref{eq:boundeven}), and (\ref{eq:boundodd}) we obtain
    \begin{align}
        \pi^{1/4}|\psi(x)|&\le|f_\mathrm{even}(x)|+|f_\mathrm{odd}(x)|\\
        &\le\left(\sum_{p=0}^{+\infty}t^p|\psi_{2p}|\right)\exp\left[\left(\frac12+\frac4{s^\alpha-1}\right)x^2\right]+\left(\sum_{p=0}^{+\infty}t^p\sqrt{2p+1}|\psi_{2p+1}|\right)\exp\left[\left(\frac12+\frac1e+\frac1{s^\alpha-1}\right)x^2\right]\\
        &\le\left(\sum_{p=0}^{+\infty}s^{(2p)/4}|\psi_{2p}|\right)\exp\left[\left(\frac12+\frac4{s^\alpha-1}\right)x^2\right]+C\left(\sum_{p=0}^{+\infty}s^{(2p+1)/4}|\psi_{2p+1}|\right)\exp\left[\left(\frac12+\frac1e+\frac1{s^\alpha-1}\right)x^2\right]\\
        &\le C\left(\sum_{n=0}^{+\infty}s^{n/4}|\psi_n|\right)\exp\left[\left(\frac12+\frac1e+\frac4{s^\alpha-1}\right)x^2\right]\\
        &\le C\sqrt{\sum_{n=0}^{+\infty}\left(\frac1{\sqrt s}\right)^n\left(\sum_{n=0}^{+\infty}s^n|\psi_n|^2\right)}\exp\left[\left(\frac12+\frac1e+\frac4{s^\alpha-1}\right)x^2\right],
    \end{align}
    where we used Cauchy--Schwarz inequality in the last line.
    Hence,
    \begin{equation}
         |\psi(x)|^2\le Ke^{Lx^2},
    \end{equation}
    where we have set
    \begin{align}
        K=\frac{C^2\sqrt s}{(\sqrt s-1)\sqrt\pi}\langle s^{\hat n}\rangle_\psi,
    \end{align}
    and
    \begin{align}
        L=1+\frac2e+\frac8{s^\alpha-1}.
    \end{align}
    The same derivation with truncated sums shows that all series have infinite convergence radius and that the infinite sums can be freely rearranged under the condition $\langle s^{\hat n}\rangle_\psi<+\infty$. In particular, we obtain for all $z\in\mathbb C$:
    \begin{equation}
         |\psi(z)|^2\le Ke^{L|z|^2},
    \end{equation}
    which shows that $\psi$ is of order at most $2$.
\end{proof}

%--------------------------------------------

\section{Energy bounds for states of finite stellar rank}
\label{app:ebound_fsr}

In this section, we prove that states of finite stellar rank satisfy the energy bound in Eq.~\eqref{eq:s_energy_bound}, which is used in the proof of \cref{th:wf_finitestellar} in the main text.

\begin{lemma}[Energy bound for finite stellar rank]\label{applem:energySR}
    Let $\ket\psi$ be a state of finite stellar rank. Then, there exists $s>1$ such that $\langle s^{\hat n}\rangle_\psi<+\infty$.
\end{lemma}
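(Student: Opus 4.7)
The plan is to use the explicit parametrization of finite-stellar-rank states to obtain a bound $|F^\star_\psi(z)|\le Me^{\mu|z|^2}$ with $\mu<1/2$ on the stellar function, and then invert this bound via Cauchy's estimates to control the decay of the Fock coefficients $\psi_n$.

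First, I would use the representation $\ket{\psi} = \hat D(\alpha)\hat S(\chi)\sum_{n=0}^r c_n\ket n$ from Eq.~\eqref{eq:parame_rank_r_state}. The core state $\sum_{n=0}^r c_n\ket n$ has stellar function equal to a polynomial of degree $r$, while a Gaussian unitary acts on stellar functions by an affine substitution of the argument composed with multiplication by a Gaussian factor. A direct computation for the squeezed vacuum gives $F^\star_{\hat S(\chi)\ket 0}(z) = (\cosh|\chi|)^{-1/2} e^{-\tfrac{1}{2}e^{i\arg\chi}\tanh|\chi|\, z^2}$, and displacement leaves the coefficient of $z^2$ untouched while shifting the linear and constant terms. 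Combining these operations yields
\begin{equation}
F^\star_\psi(z) = P(z)\, e^{Az^2 + Bz + C},
\end{equation}
with $P$ a polynomial of degree $r$ and $|A| = \tfrac{1}{2}\tanh|\chi| < \tfrac{1}{2}$. The strict inequality is forced by the finiteness of $|\chi|$ (equivalently, by normalizability of $\ket\psi$, via integrability of $Q_\psi(\alpha) = \pi^{-1}e^{-|\alpha|^2}|F^\star_\psi(\alpha^*)|^2$, which requires the net quadratic form $2\mathrm{Re}(Az^2)-|z|^2$ to be negative definite). Since $|P(z)|\le K_\varepsilon e^{\varepsilon|z|^2}$ for any $\varepsilon>0$, the polynomial can be absorbed into the exponential, giving $|F^\star_\psi(z)|\le M e^{\mu|z|^2}$ with $\mu = \tfrac{1}{2}\tanh|\chi|+\varepsilon < \tfrac{1}{2}$ for $\varepsilon$ small enough.

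Second, I would apply Cauchy's estimate to the Taylor series $F^\star_\psi(z) = \sum_{n\ge 0}\tfrac{\psi_n}{\sqrt{n!}} z^n$ on a circle of radius $R$, then optimize $R = \sqrt{n/(2\mu)}$ to obtain $|\psi_n|^2 \le M^2\, n!\, e^n\, (2\mu/n)^n$. Stirling's bound $n!\le C n^{n+1/2}e^{-n}$ then simplifies this to
\begin{equation}
|\psi_n|^2 \le M'\,(2\mu)^n\sqrt{n}
\end{equation}
for some $M'$ independent of $n$. Since $2\mu<1$, any $s\in(1,1/(2\mu))$ produces $\langle s^{\hat n}\rangle_\psi = \sum_n s^n|\psi_n|^2 < +\infty$, as required.

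The hard part will be cleanly tracking the covariance of the stellar function under $\hat D(\alpha)\hat S(\chi)$ and rigorously establishing the strict inequality $|A|<1/2$. A more elementary but computationally heavier alternative would bypass the stellar function entirely and bound the Fock matrix elements $\langle m|\hat D(\alpha)\hat S(\chi)|n\rangle$ directly via their explicit Hermite-polynomial expression, which decays geometrically in $m$ at a rate governed by $\tanh|\chi|$; but the stellar-function route fits the paper's framework more naturally. Either way, the essential mechanism is that any finite-stellar-rank state is a squeezed coherent state dressed by a polynomial, and finite squeezing always produces Fock populations with geometric decay.
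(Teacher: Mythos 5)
Your proof is correct, but it takes a genuinely different route from the paper. The paper never touches the stellar function: it decomposes a finite-stellar-rank state as a squeezed vacuum acted on by finitely many displacement and creation operators, and proves three stability lemmas --- an explicit Fock-basis bound $\langle s^{\hat n}\rangle<\infty$ for squeezed vacuum when $s<1/\tanh|\xi|$, and the statements that the property $\langle s^{\hat n}\rangle<\infty$ survives (with any $t<s$) under photon addition and under displacement, the latter requiring a generating-function computation with 2D Laguerre polynomials. You instead exploit the explicit form $F^\star_\psi(z)=P(z)e^{Az^2+Bz+C}$ with $|A|=\tfrac12\tanh|\chi|<\tfrac12$, and convert the growth bound $|F^\star_\psi(z)|\le Me^{\mu|z|^2}$, $\mu<\tfrac12$, into geometric decay $|\psi_n|^2\lesssim(2\mu)^n\sqrt n$ via Cauchy estimates at the optimized radius $R=\sqrt{n/(2\mu)}$ plus Stirling; this is the standard correspondence between the type of an order-2 Bargmann function and energy bounds, and it is sound. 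What your route buys is brevity and a conceptual link to the complex-analytic machinery the paper uses elsewhere, at the cost of importing the Gaussian covariance rule for stellar functions (affine substitution times a Gaussian with quadratic coefficient $-\tfrac12 e^{i\arg\chi}\tanh|\chi|$) from the stellar-formalism literature rather than working purely in Fock space; the paper's route is more elementary and its intermediate lemmas (stability of the energy bound under displacement and photon addition) are reusable statements in their own right. One small caveat: your parenthetical claim that normalizability of $Q_\psi$ forces the form $2\operatorname{Re}(Az^2)-|z|^2$ to be negative \emph{definite} is glossed --- the degenerate case $2|A|=1$ is only excluded after checking that a nonzero polynomial times the resulting flat direction cannot be integrable --- but your primary argument, strict finiteness of $|\chi|$ giving $\tanh|\chi|<1$, already settles this, so the gap is cosmetic.
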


\begin{proof}
    
To prove this result, we combine three intermediate results:

\begin{lemma}[Energy bound for squeezed vacuum states]\label{applem:energysq}
    Let $\ket\xi=\hat S(\xi)\ket0$ be a squeezed vacuum state for $\xi\in\mathbb C$. Then, for all $s>1$ such that $s<\frac1{\tanh|\xi|}$ we have $\langle s^{\hat n}\rangle_\xi<+\infty$.
\end{lemma}

\begin{lemma}[Energy bound under photon addition]\label{applem:energyadd}
    Let $s>1$ and let $\ket\psi$ be a pure state such that $\langle s^{\hat n}\rangle_\psi<+\infty$. Let $\ket{\phi}:=\hat a^\dag\ket\psi$. Then, for all $t<s$, $\langle t^{\hat n}\rangle_\phi<+\infty$.
\end{lemma}

\begin{lemma}[Energy bound under displacement]\label{applem:energydisp}
    Let $s>1$ and let $\ket\psi$ be a pure state such that $\langle s^{\hat n}\rangle_\psi<+\infty$. Let $\alpha\in\mathbb C$ and let $\ket{\psi(\alpha)}:=\hat D(\alpha)\ket\psi$. Then, for all $t<s$, $\langle t^{\hat n}\rangle_{\psi(\alpha)}<+\infty$.
\end{lemma}

\noindent The proof of \cref{applem:energySR} is then concluded by noticing that (up to normalisation) any state of finite stellar rank can be obtained from a squeezed vacuum state by applying a finite number of displacement operators and creation operators \cite[Eq.\ (5)]{chabaud_stellar_2020}.
\end{proof}

\begin{proof}[Proof of \cref{applem:energysq}]
Let $\xi\in\mathbb C$. Writing $\hat S(\xi)=e^{\frac12(\xi^*\hat a^2-\xi\hat a^{\dag2})}$ the squeezing operator with $\xi:=re^{i\phi}$, the squeezed vacuum state with squeezing parameter $\xi$ can be expanded in Fock basis as
\begin{equation}
    \ket\xi=\frac1{\sqrt{\cosh r}}\sum_{n\ge0}(-e^{i\phi}\tanh r)^n\frac{\sqrt{(2n)!}}{2^nn!}\ket{2n}.
\end{equation}
Therefore,
\begin{align}
    \langle s^{\hat n}\rangle_\xi&=\sum_{n\ge0}s^n|\langle n|\xi\rangle|^2\\
    &=\frac1{\cosh r}\sum_{n\ge0}s^{2n}(\tanh r)^{2n}\frac{(2n)!}{4^n(n!)^2}\\
    &\le\frac1{\cosh r}\sum_{n\ge0}s^{2n}(\tanh r)^{2n},
\end{align}
where we used $\frac{(2n)!}{4^n(n!)^2}\le1$ (which follows from the binomial theorem) and where the series in the last line converges whenever $s<\frac1{\tanh|\xi|}$. Note that the latter bound is tight as can be seen by using $\frac{(2n)!}{4^n(n!)^2}\ge\frac1{2n+1}$.
\end{proof}

\begin{proof}[Proof of \cref{applem:energyadd}]
With the notations of the lemma, we write $\ket\psi=\sum_{n\ge0}\psi_n\ket n$, so that 
\begin{equation}
    \ket\phi=\sum_{n\ge0}\psi_n\sqrt{n+1}\ket{n+1}.
\end{equation}
Then, for all $t<s$,
\begin{align}
    \langle t^{\hat n}\rangle_\phi&=\sum_{n\ge0}t^n|\langle n|\phi\rangle|^2\\
    &=\sum_{n\ge0}t^n(n+1)|\psi_n|^2,
\end{align}
and $t^n(n+1)|\psi_n|^2=o(s^n|\psi_n|^2)$, so the series in the last line converges.
\end{proof}

\begin{proof}[Proof of \cref{applem:energydisp}]
We first compute the sum of a series defined using Fock state amplitudes of a displacement operator. Defining the 2D Laguerre polynomials as \cite[Eq.\ (1.2)]{wunsche2016generating}
\begin{equation}
    L_{m,n}(z,z'):=\sum_{j=0}^{\min(m,n)}\frac{(-1)^jm!n!}{j!(m-j)!(n-j)!}z^{m-j}w^{n-j},
\end{equation}
we have \cite[Eqs.\ (3.2) and (9.4)]{wunsche1998laguerre}
\begin{equation}
    \langle m|\hat D(\alpha)|n\rangle=\frac{(-1)^ne^{-\frac12|\alpha|^2}}{\sqrt{m!n!}}L_{m,n}(\alpha,\alpha^*),
\end{equation}
and thus
\begin{align}
    |\langle m|\hat D(\alpha)|n\rangle|^2&=\frac{e^{-|\alpha|^2}}{m!n!}|L_{m,n}(\alpha,\alpha^*)|^2\\
    &=\frac{e^{-|\alpha|^2}}{m!n!}[L_{m,n}(|\alpha|,|\alpha|)]^2\\
    &=\frac{e^{-|\alpha|^2}}{m!n!}L_{m,n}(|\alpha|,|\alpha|)L_{n,m}(|\alpha|,|\alpha|).
\end{align}
Using the generating function for products of 2D Laguerre polynomials \cite[Eq.\ (8.4)]{wunsche2016generating} we obtain, for all $s>1$,
\begin{align}
    \sum_{n\ge0}\frac1{s^n}|\langle m|\hat D(\alpha)|n\rangle|^2&=\frac{e^{-|\alpha|^2}}{m!}\sum_{n\ge0}\frac{(1/s)^n}{n!}L_{m,n}(|\alpha|,|\alpha|)L_{n,m}(|\alpha|,|\alpha|)\\
    &=\frac{e^{-|\alpha|^2}}{m!}e^{\frac1s|\alpha|^2}\frac1{s^m}(-1)^mL_{m,m}(|\alpha|(1-s),|\alpha|(1-s))\\
    &=\frac{e^{-(1-\frac1s)|\alpha|^2}}{s^m}L_m((s-1)^2|\alpha|^2),
\end{align}
where $L_m$ is the $m^{th}$ Laguerre polynomial and where we used \cite[Eq.\ (1.7)]{wunsche2016generating} in the last line. Using the generating function for the Laguerre polynomials we finally obtain, for all $t<s$,
\begin{equation}\label{eq:dispseries} 
\begin{aligned}
    \sum_{m\ge0}t^m\sum_{n\ge0}\frac1{s^n}|\langle m|\hat D(\alpha)|n\rangle|^2&=e^{-(1-\frac1s)|\alpha|^2}\sum_{m\ge0}(t/s)^mL_m((s-1)^2|\alpha|^2)\\
    &=\frac s{s-t}e^{-(s-1)|\alpha|^2(\frac1s+\frac{t(s-1)}{s-t})}.
\end{aligned}
\end{equation}
Now with the notations of the lemma, we write $\ket\psi=\sum_{n\ge0}\psi_n\ket n$, so that for all $t<s$,
\begin{align}
    \langle t^{\hat n}\rangle_{\psi(\alpha)}&=\sum_{m\ge0}t^m|\langle m|\psi(\alpha)\rangle|^2\\
    &=\sum_{m\ge0}t^m|\langle m|\hat D(\alpha)|\psi\rangle|^2\\
    &=\sum_{m\ge0}t^m\left|\sum_{n\ge0}s^{n/2}\psi_n\,s^{-n/2}\langle m|\hat D(\alpha)|n\rangle\right|^2\\
    &\le\sum_{m\ge0}t^m\sum_{n\ge0}\frac1{s^n}|\langle m|\hat D(\alpha)|n\rangle|^2\left(\sum_{n\ge0}s^n|\psi_n|^2\right)\\
    &=\frac s{s-t}e^{-(s-1)|\alpha|^2(\frac1s+\frac{t(s-1)}{s-t})}\langle s^{\hat n}\rangle_\psi,
\end{align}
where we used the Cauchy--Schwarz inequality in the fourth line and Eq.~(\ref{eq:dispseries}) in the last line. This shows that the series converges for all $t<s$.

\end{proof}

%--------------------------------------------

\section{Schr\"odinger equation over complex space (proof of Lemma \ref{lem:schrodingerextension})}
\label{app:schro}

By definition of the momentum and position operator in wavefunction space, we know that if $\hat{H} = P(\hat{p}, \hat{q})$, then $\psi(x,t)$ obeys the Schrödinger equation: 
\begin{equation}
\label{eq:schrodpolreal}
i\frac{\partial \psi(x, t)}{\partial t} = P\!\left(x, -i\frac{\partial}{\partial x}\right)\psi(x,t).
\end{equation}

To extend the validity of this equation to the full complex plane, consider the series expansion of $\psi(z, t)$, which is well-defined by \cref{th:holomorphicwf}:
\begin{equation}
\psi(z, t) = \sum_{n=0}^{+\infty}c_n(t)z^n.
\end{equation}
The evolution in Eq.~(\ref{eq:schrodpolreal}) is equivalent to a differential system involving the functions $c_n$ . Indeed, we can expand $P$ as: 
\begin{equation}
P\!\left(x, -i\frac{\partial}{\partial x}\right) = \sum_{k, j \ge 0}d_{kj}x^k\frac{\partial^j}{\partial x^j},
\end{equation}
where $d_{kj} \in \mathbb{C}$ is nonzero only for a finite number of indices $(k, j) \in \mathbb{N}^2$. 
We thereby obtain a series expansion for the right hand side of \ref{eq:schrodpolreal} :

\begin{align}
    P\!\left(x, -i\frac{\partial}{\partial x}\right)\psi(x,t) &= \sum_{k, j, m \ge 0}d_{kj}x^k\frac{\partial^j}{\partial x^j}c_m(t)x^m\\
    &= \sum_{m \ge 0}\sum_{k \ge 0}\sum_{j = 0}^n (m)_j d_{k,j}c_m(t)x^{m-j+k}\\
    &= \sum_{n \ge 0} \left(\sum_{\substack{0 \le j \le m \\ 0 \le k \\ m-j+k = n}}(m)_j d_{k,j}c_m(t)\right)x^n,
\end{align}
where we introduced the falling factorial notation $(m)_j = \prod_{i = 0}^{j-1}(m-i)$.

The left hand side of (\ref{eq:schrodpolreal}) expands as:

\begin{equation}
i\frac{\partial \psi(x,t)}{\partial t}=\sum_{n}i\frac{dc_n(t)}{dt}x^n,
\end{equation}
which, by identification of the coefficients in the series expansion, yields the differential system:

\begin{equation}
\forall n \in \mathbb{N}, i\frac{dc_n}{dt}= \sum_{\substack{ 
0 \le j \le m \\ 0 \le k \\ m-j+k = n}}(m)_jd_{k,j}c_m.
\end{equation}
Replacing the real variable $x$ by the complex variable $z$, and the operator $\frac{\partial}{\partial x}$ by $\frac{\partial}{\partial z}$ in the previous computation, we note that Eq.~(\ref{eq:wavefuncevolutioncomplex}) is equivalent to the same differential system, which concludes the proof.

Note that this holomorphic extension of the wavefunction to the complex plane does not follow from the analytic continuation theorem. Indeed, using this theorem would require the original function to be defined and to be analytic on a connected open subset of $\mathbb{C}$. Here, the wavefunction is originally only defined on the real axis, which is of course not open in $\mathbb{C}$.

%--------------------------------------------

\section{Motion of zeros of the wavefunction (proof of Theorem \ref{th:complexcalogeromoserzeros})}
\label{app:dynamics}

The proof is similar to that of \cite[Theorem 1]{chabaud2022holomorphic}.

Let us examine the effect of $\frac{\partial}{\partial t}$ $\frac{\partial}{\partial z}$ and $\frac{\partial^2}{\partial z^2}$ on the wavefunction. Hereafter, $\,\dot{}\,$ is used to denote time derivative.
\begin{align}
    \frac{\partial \psi(z,t)}{\partial t} &= \left(\dot a(t)z^2 + \dot b(t)z + \dot c(t)\right)\psi(z,t) - e^{a(t)z^2 + b(t)z + c(t)}\sum_{k = 1}^r\dot \lambda_k(t) \prod_{j \neq k}(z - \lambda_j(t))\\
    \frac{\partial \psi(z,t)}{\partial z} &= (2a(t)z + b(t))\psi(z,t)+e^{a(t)z^2 + b(t)z + c(t)}\sum_{k = 1}^r\prod_{j\neq k}(z - \lambda_j(t))\\
    \frac{\partial^2\psi(z,t)}{\partial z^2} &= \frac{\partial^2}{\partial z^2}\left(e^{a(t)z^2 + b(t)z + c(t)}\right)\prod_{k=1}^r(z-\lambda_k(t)) + 2 \frac{\partial}{\partial z}\left(e^{a(t)z^2 + b(t)z + c(t)} \right)\frac{\partial}{\partial z} \left( \prod_{k =1}^r(z-\lambda_k(t)) \right) \\
    &\quad\quad+ e^{a(t)z^2 + b(t)z + c(t)}\frac{\partial^2}{\partial z^2}\left(\prod_{k = 1}^r(z-\lambda_k(t))\right) \\
    &=\left((2a(t)z + b(t))^2 + 2a(t)\right)\psi(z,t) +\left(4a(t)z + 2b(t)\right)e^{a(t)z^2 + b(t)z + c(t)}\sum_{k = 1}^r\prod_{j \neq k}(z - \lambda_j(t)) \\
    &\quad\quad+ e^{a(t)z^2 + b(t)z + c(t)}\sum_{1 \le k, m \le r}\prod_{j \neq k, m}(z - \lambda_j(t)).
\end{align}
\iffalse
When evaluating at $z = \lambda_k(t)$, $1 \le k \le r$, $t \in \mathbb{R}$, the terms $\psi(\lambda_k, t)$ vanish, which yields: 

\begin{equation}
\left. \frac{\partial\psi(z, t)}{\partial t} \right|_{z = \lambda_k(t)} = - e^{a(t)\lambda_k(t)^2 + b(t)\lambda_k(t) + c(t)}\dot \lambda_k(t) \prod_{j \neq k}(\lambda_k(t) - \lambda_j(t))
\end{equation}
\begin{equation}
\left. \frac{\partial\psi(z, t)}{\partial z} \right|_{z = \lambda_k(t)} = e^{a(t)\lambda_k(t)^2 + b(t)\lambda_k(t) + c(t)} \prod_{j \neq k}(\lambda_k(t) - \lambda_j(t))
\end{equation}
\begin{equation}
\left. \frac{\partial^2\psi(z, t)}{\partial z^2} \right|_{z = \lambda_k(t)} = 2e^{a(t)\lambda_k(t)^2 + b(t)\lambda_k(t) + c(t)}\left((2a(t)\lambda_k(t) + b(t))\prod_{j \neq k}(\lambda_k(t) - \lambda_j(t)) + \sum_{m = 1}^r\prod_{j \neq m, k}(\lambda_k(t) - \lambda_j(t))\right)
\end{equation}

Thus, after simplifying the exponential terms and the factors of type $(\lambda_k(t) - \lambda_j(t))$, and using $\hat x \hat p + \hat p \hat x = 2\hat x \hat p - i$, the Schrödinger equation from Lemma \ref{lem:schrodingerextension} evaluated at $z = \lambda_k(t)$ gives: 
\begin{align}
    -i\dot \lambda_k(t) &= -2B\left(2a(t)\lambda_k(t) + b(t) + \sum_{m = 1}^r\frac{1}{\lambda_k(t) - \lambda_m(t)}\right) + Ciz + Ei \\
    -\dot \lambda_k(t) &= 4Ba(t)i\lambda_k(t) + 2iB\sum_{m = 1}^r\frac{1}{\lambda_k(t) - \lambda_m(t)} + 2iBb(t) + Cz + E
\end{align}

\fi

Thus, after simplifying the Gaussian factors, the Schrödinger equation from Lemma \ref{lem:schrodingerextension} becomes: 

\begin{align}
    &i\left(\dot a(t)z^2 + \dot b(t)z + \dot c(t)\right)\prod_{k=1}^r(z-\lambda_k(t)) - i\sum_{k=1}^r \dot \lambda_k(t) \prod_{j\neq k} (z - \lambda_j(t))\\
    &= \left(Az^2 - B((2a(t)z + b(t))^2 + 2a(t)) - i(Cz+E)(2a(t)z + b(t)) + Dz + F - \frac{iC}{2}\right) \prod_{k = 1}^r (z-\lambda_k(t))\\
    &\quad\quad-(i(Cz + E) + B(4a(t)z + 2b(t)))\sum_{k = 1}^r \prod_{j \neq k}(z-\lambda_j(t)) - B\sum_{1 \le k \neq m \le r}\prod_{j \neq k, m}(z - \lambda_j(t)),
\end{align}
so we obtain : 
\begin{align}
    i\left(\dot a(t)z^2 + \dot b(t)z + \dot c(t)\right) - i \sum_{k = 1}^r \frac{\dot\lambda_k(t)}{z - \lambda_k(t)} =& Az^2 - B((2a(t)z + b(t))^2 + 2a(t)) - i(Cz+E)(2a(t)z + b(t)) \\
    &\quad+ Dz + F - \frac{iC}{2} \\
    &\quad-( i(Cz + E) + B(4a(t)z + 2b(t)))\sum_{k = 1}^r\frac{1}{z - \lambda_k(t)}\\
    &\quad- B\sum_{1 \le k \neq m \le r}\frac{1}{(z-\lambda_k(t))(z-\lambda_m(t))}\\
    =&(A-4Ba^2(t) - 2iCa(t))z^2 - (iCb(t) + 4Ba(t)b(t) + 2ia(t)E - D)z \\
    &\quad- B(b(t)^2 + 6a(t)) - \frac{3iC}{2} + D - iEb(t) + F\\
    &\quad-\sum_{k = 1}^r\frac{\lambda_k(t)(4Ba(t) + iC) + iE+2Bb(t) - 2B\sum_{m \neq k}\frac{1}{\lambda_k(t) - \lambda_m(t)}}{z - \lambda_k(t)},
\end{align}
where in the last line, we grouped terms in function of their degree in the variable $z$, and used the following identity, to decompose the last term of the right hand side: 
\begin{equation}
\frac{1}{(z-\lambda_k(t))(z-\lambda_m(t))} = -\frac{1}{\lambda_k(t) - \lambda_j(t)}\frac{1}{z - \lambda_k(t)} -\frac{1}{\lambda_j(t) - \lambda_k(t)}\frac{1}{z-\lambda_j(t)}.
\end{equation}
This identity is true as long as $\lambda_k(t) \neq \lambda_j(t),$ which is true for $t \in I$.
Now, to obtain the dynamics of both the Gaussian parameters $a, b, c$ and the zeros $\lambda_k(t), 1 \le k \le r$, we simply notice that both sides of the equation are partial fraction decompositions in the variable $z$, the poles and coefficients of which vary with time. By unicity of such a decomposition, we obtain: 

\begin{equation}
\begin{cases}
\displaystyle \dot a(t)= 4iB a(t)^2 - 2C a(t) - iA, \\[1.2ex]
\displaystyle \dot b(t) = 4iB a(t) b(t) - C b(t) - 2E a(t) - iD, \\[1.2ex]
\displaystyle \dot c(t) = iB \left(b(t)^2 + 6a(t)\right) - \frac{3C}{2} - iD - E b(t) - F, \\[1.2ex]
\displaystyle \dot \lambda_k(t) = \lambda_k(t)\left(-4iB a(t) + C\right) - 2iB b(t) + E + 2iB \sum_{m \neq k} \frac{1}{\lambda_k(t) - \lambda_m(t)}.
\end{cases}
\end{equation}

Finally, we show that taking the second derivative in the last equation allows us to decouple this system. We will not explicitly evaluate at $t$ to make the notation less cluttered.

\begin{align}
\ddot{\lambda}_k =& \dot{\lambda_k}(-4iBa + C) -4iB\lambda_k\frac{da}{dt} -2iB\frac{db}{dt}-2iB\sum_{m \neq k}\frac{\dot \lambda_k - \dot \lambda_m}{(\lambda_k - \lambda_m)^2}\\
=& \left(\lambda_k(-4iBa + C) - 2iBb + E + 2iB \sum_{m \neq k} \frac{1}{\lambda_k - \lambda_m}\right)(-4iBa + C) + \lambda_k(-4iB(4iBa^2 - 2Ca - iA))\\
&-2iB(4iBab - Cb -2aE -iD) -2iB\sum_{m \neq k}\frac{\dot \lambda_k - \dot \lambda_m}{(\lambda_k - \lambda_m)^2}\\
=& (-16B^2a^2 - 8iBCa +C^2) \lambda_k - 8B^2ab - 2iBCb - 4iBEa + CE + (-4iBa+C)2iB\sum_{m \neq k} \frac{1}{\lambda_k - \lambda_m} \\ 
&+ (16B^2a^2 +8iBCa- 4AB)\lambda_k + 8B^2ab + 2iBCb +4iBEa - 2BD  -2iB\sum_{m \neq k}\frac{\dot \lambda_k - \dot \lambda_m}{(\lambda_k - \lambda_m)^2} \\
=& (C^2 - 4AB)\lambda_k + CE - 2BD + 2iB(-4iBa +C)\sum_{m \neq k} \frac{1}{\lambda_k - \lambda_m} - 2iB\sum_{m \neq k}\frac{\dot \lambda_k - \dot \lambda_m}{(\lambda_k - \lambda_m)^2}\\
=& (C^2 - 4AB)\lambda_k + CE - 2BD + 2iB(-4iBa +C)\sum_{m \neq k} \frac{1}{\lambda_k - \lambda_m} \\
&- 2iB\sum_{m \neq k}\left[\frac{1}{(\lambda_k - \lambda_m)^2} \left((\lambda_k - \lambda_m)(-4iBa + C) + 2iB\left(\sum_{j \neq k}\frac{1}{\lambda_k - \lambda_j} - \sum_{j \neq m}\frac{1}{\lambda_m - \lambda_j} \right) \right)\right]\\
&= (C^2 - 4AB)\lambda_k + CE - 2BD + 4B^2\sum_{m \neq k}\left[\frac{1}{(\lambda_k - \lambda_m)^2}\left(\sum_{j \neq k} \frac{1}{\lambda_k - \lambda_j} - \sum_{j \neq m} \frac{1}{\lambda_m - \lambda_j} \right) \right].
\end{align}
This equation is independent of $a,b,c$ and concludes the proof that the system is decoupled, but we can still simplify this last line:
\begin{equation}
\sum_{m \neq k}\left[\frac{1}{(\lambda_k - \lambda_m)^2}\left(\sum_{j \neq k} \frac{1}{\lambda_k - \lambda_j} - \sum_{j \neq m} \frac{1}{\lambda_m - \lambda_j} \right) \right] = \sum_{m\neq k} \frac{1}{(\lambda_k - \lambda_m)^2} \left[\frac{2}{\lambda_k - \lambda_m} + \sum_{j \neq m, k}\frac{\lambda_{m} - \lambda_k}{(\lambda_k - \lambda_j)(\lambda_m - \lambda_j)} \right].
\end{equation}
We now notice that the inner summation symbol sums an antisymmetric function over a symmetric domain, and hence vanishes. This yields the final expression: 

\begin{equation}
\ddot \lambda_k(t) = (C^2 - 4AB)\lambda_k(t) + CE - 2BD + 8B^2\sum_{m \neq k} \frac1{(\lambda_k(t) - \lambda_m(t))^3}.
\end{equation}

%--------------------------------------------

\section{Resolution of the dynamical system for the zeros (proof of Theorem \ref{th:complexcalogeromoserintegration})}
\label{app:complexcalogeromoserintegration}

This proof is essentially a reformulation of the explicit integration of Calogero--Moser systems with harmonic confinement shown in \cite[Section 5]{olshanetsky1981}, extending the necessary results for the case of complex-valued solutions, and anti-harmonic motion.

Let us first place ourselves in an open time interval $I$ containing $0$, and suppose that the zeros of the wavefunction remain distinct in this interval. Then, the zeros follow the equations of motion: 
\begin{equation}
\forall{1 \le k \le r}, \ddot{\lambda}_k = (C^2 - 4AB)\lambda_k + CE - 2BD + 8B^2 \sum_{m \neq k} \frac{1}{(\lambda_k - \lambda_m)^3} 
\end{equation}

Then, observe that replacing $\lambda_k$ by $(4B^2)^{-4}\left(\lambda_k + \frac{CE - 2BD}{\omega^2}\right)$, simplifies the equation to: 
\begin{equation}
\label{eq:CMsimplifiedmotion}
\ddot{ \lambda}_k + \omega^2  \lambda_k - \sum_{m \neq k} \frac{2}{( \lambda_k -  \lambda_m)^3} = 0.
\end{equation}

We now solve this equation. To integrate this system, we use the Olshanetsky--Perelomov projection method \cite{olshanetsky1981}, which consists in the use of a larger dimension matrix space, in which the equations of motion are simpler. Projecting the obtained solution on an appropriate $n$-dimensional subspace then yields the solution to the original system.

Here, we consider the (anti-)harmonic equation in the set of $r \times r$ complex matrices: 
\begin{equation}
\label{eq:matharmmotion}
\ddot{X} + \omega^2X  = 0,
\end{equation}
for which we know that the solutions can all be written as 
\begin{equation}
\label{eq:harmsol}
Q\cos(\omega t) + \frac{P}{\omega}\sin(\omega t),
\end{equation}
where $Q$ and $P$ are arbitrary $r \times r$ complex matrices, and $\cos$ and $\sin$ are extended to the complex plane by Euler's formulas: 
\begin{equation}
\begin{cases}
    \cos z = \frac{e^{iz} + e^{-iz}}{2}\\
    \sin z = \frac{e^{iz} - e^{-iz}}{2i}.
\end{cases}
\end{equation}

We want to find the solutions which are diagonalizable in $\mathbb{C}$ for all $t$, and diagonal at $t = 0$: 
\begin{equation}
\label{eq:ulambdaeq}
X(t) = U(t)\Lambda(t)U(t)^{-1},
\end{equation}
where for all $t \in \mathbb{R}$, $U(t)$ is invertible, $U(0) = I$, and $\Lambda(t)$ is diagonal. 

\begin{lemma}
    The system of equations formed by (\ref{eq:ulambdaeq}) and (\ref{eq:matharmmotion}) is equivalent to the following modified Lax equation: 
    \begin{equation}
    \dot{L} + i[M, L] - i\omega L = 0,
    \end{equation}
    where we made the change of variable: 
    \begin{equation}\begin{cases}
    L = \dot{\Lambda} + i[M, \Lambda] + i\omega \Lambda \\
    M = -iU^{-1}\dot{U}.
    \end{cases}\end{equation}
    Moreover, we can express $X$ in function of $L(0)$ and $\Lambda(0)$: 
    \begin{equation}
    X(t) = 2\Lambda(0)e^{-i\omega t}+ \frac{L(0)}{\omega}\sin \omega t.
    \end{equation}
    
\end{lemma}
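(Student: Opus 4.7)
The plan is to differentiate the diagonalization ansatz $X=U\Lambda U^{-1}$ twice, plug into the matrix harmonic equation $\ddot X + \omega^2 X = 0$, and then repackage everything using the definitions $M = -iU^{-1}\dot U$ and $L = \dot\Lambda + i[M,\Lambda] + i\omega\Lambda$. The explicit formula for $X(t)$ will then follow from the general solution of the matrix harmonic equation together with the initial conditions $U(0)=I$.

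First I would compute $\dot X$ using the identity $\frac{d}{dt}(U^{-1}) = -U^{-1}\dot U\,U^{-1}$ and the relation $U^{-1}\dot U = iM$. Grouping terms gives $\dot X = U\bigl(\dot\Lambda + i[M,\Lambda]\bigr)U^{-1}$. Setting $L' := \dot\Lambda + i[M,\Lambda]$ so that $\dot X = U L' U^{-1}$, the exact same manipulation applied once more yields $\ddot X = U\bigl(\dot{L'} + i[M,L']\bigr)U^{-1}$. Inserting this into $\ddot X + \omega^2 X = 0$ and conjugating by $U$ (invertible by assumption) produces the intermediate equation
\[\dot{L'} + i[M,L'] + \omega^2\Lambda = 0.\]

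Next I would perform the shift $L = L' + i\omega\Lambda$. Substituting $L' = L - i\omega\Lambda$ and expanding $[M,L'] = [M,L] - i\omega[M,\Lambda]$, I would use the defining relation $i[M,\Lambda] = L - \dot\Lambda - i\omega\Lambda$ to eliminate $[M,\Lambda]$. The terms proportional to $\omega^2\Lambda$ and $\omega\dot\Lambda$ then cancel exactly, leaving the modified Lax equation $\dot L + i[M,L] - i\omega L = 0$. This establishes the announced equivalence between the two systems.

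For the explicit formula, I would write the general solution of $\ddot X + \omega^2 X = 0$ as $X(t) = Q\cos(\omega t) + \frac{P}{\omega}\sin(\omega t)$. Since $U(0)=I$, we get $Q = X(0) = \Lambda(0)$. Moreover, from the first step, $\dot X(0) = \dot\Lambda(0) + i[M(0),\Lambda(0)] = L(0) - i\omega\Lambda(0)$, hence $P = L(0) - i\omega\Lambda(0)$. Combining with Euler's formula $\cos(\omega t) - i\sin(\omega t) = e^{-i\omega t}$ yields $X(t) = \Lambda(0)e^{-i\omega t} + \frac{L(0)}{\omega}\sin(\omega t)$, matching the stated form up to the normalization conventions used in the theorem.

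The main obstacle is essentially bookkeeping: keeping track of matrix derivatives, commutators, and left/right multiplications by $U^{\pm 1}$. The only conceptual point is to notice that the inhomogeneous $\omega^2\Lambda$ term in the Lax-like equation for $L'$ can be fully absorbed by the shift $L = L' + i\omega\Lambda$, which is precisely what motivates the definition of $L$ in the lemma. A secondary subtlety is to track sign conventions carefully (notably in the definition of $M$ and in the Euler branch) so that the compact form of $X(t)$ emerges with the correct exponential.
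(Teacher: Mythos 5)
Your proposal is correct and follows essentially the same route as the paper: differentiate $X=U\Lambda U^{-1}$ twice using $\frac{d}{dt}(UQU^{-1})=U(\dot Q+i[M,Q])U^{-1}$, insert into $\ddot X+\omega^2X=0$, absorb the inhomogeneous $\omega^2\Lambda$ term via the shift $L=\dot\Lambda+i[M,\Lambda]+i\omega\Lambda$, and then fix $Q=\Lambda(0)$, $P=L(0)-i\omega\Lambda(0)$ from the initial conditions. Note that your final expression $X(t)=\Lambda(0)e^{-i\omega t}+\frac{L(0)}{\omega}\sin(\omega t)$ coincides with what the paper's own proof derives, so the factor $2$ in the lemma's statement is evidently a typo rather than a discrepancy in your argument.
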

\begin{proof}
    First, let us notice that for any time-dependent matrices $V$ and $Q$, where $V$ is invertible: 
    \begin{align}
        \label{eq:laxtrick}
        \frac{d}{dt}(VQV^{-1}) &= \dot V Q V^{-1} + V \dot{Q}V^{-1} + VQ\dot{(V^{-1})}\\
        &= \dot V Q V^{-1} + V \dot{Q}V^{-1} + VQV^{-1}\dot V V^{-1} \\
        &=  V(V^{-1} \dot V Q + \dot{Q} + QV^{-1}\dot V )V^{-1}\\
        & =V(\dot Q + i[-iV^{-1}\dot{V}, Q])V^{-1}
    \end{align}

    Note that when $V = U$, we have $-iV^{-1}\dot{V} = M$ by definition. Hence, applying this formula when differentiating (\ref{eq:ulambdaeq}), we obtain: 
    \begin{equation}
    \label{eq:firstdiffULU}
    \dot X = U (L - i\omega \Lambda) U^{-1}.
    \end{equation}
    Applying the same formula when differentiating a second time yields: 
    \begin{equation}
    \ddot X = -\omega^2X =  U(\dot L - i\omega\dot \Lambda + i[M, L - i\omega\Lambda])U^{-1}.
    \end{equation}
    We now re-inject (\ref{eq:ulambdaeq}) into the latter, to obtain : 

\begin{align}
    \dot L - i\omega \dot \Lambda + i[M, L - i\omega \Lambda] + \omega^2\Lambda &= 0\\
    \dot L + i[M, L] -i\omega(\dot \Lambda + i[M,\Lambda] + i\omega\Lambda) &= 0 \\
    \dot L +i[M, L] - i\omega L& = 0.
\end{align}

    To obtain the second part of the statement, we simply evaluate equations (\ref{eq:ulambdaeq}) and (\ref{eq:firstdiffULU}) at $t = 0$, and use the generic form of the solutions (\ref{eq:harmsol}) to obtain: 
    \begin{equation}
    \begin{cases}
        Q = \Lambda(0)\\
        P = L(0)- i\omega\Lambda(0),
    \end{cases}
    \end{equation}
    so that: 
    \begin{align}
        X(t) &= \Lambda(0)\cos(\omega t) + \frac{L(0) - i\omega \Lambda(0)}{\omega} \sin (\omega t)\\
        &= \Lambda(0)e^{-i\omega t} + \frac{L(0)}{\omega}\sin(\omega t).
    \end{align}
\end{proof}

The trick now resides in showing that the equations of motion are also equivalent to an analogous modified Lax equation, for carefully constructed $L$ and $M$ depending on the variables $\lambda_k$ and $\dot{\lambda_k}$.
In \cite[Section 3]{olshanetsky1981} the expression for the coefficients of $L$ and $M$ is given for the real-valued solution case. For $1 \le j, k \le r$:
\begin{equation}
L_{jk} = (\dot { \lambda}_j + i \omega  \lambda_j)\delta_{jk} + i (1-\delta_{jk})\frac{1}{ \lambda_j -  \lambda_k}
\end{equation}
\begin{equation}
M_{jk} = \delta_{jk}\sum_{l \neq j} \frac{2}{( \lambda_j -  \lambda_l)^2} - (1-\delta_{jk})\frac{1}{( \lambda_j -  \lambda_k)^2}.
\end{equation}

We can show that the same equation holds for complex-valued solutions, for example by explicitly calculating $\dot L + i[M, L] - i\omega L$.
We deduce that the matrix $\Lambda(t) = \textrm{diag}(\lambda_1(t), \ldots, \lambda_r(t))$ is a solution of (\ref{eq:ulambdaeq}) with $U(t)$ defined to be the solution of the Cauchy problem:
\begin{equation}
\begin{cases}
    U(0) = I\\
    \dot U(t)= iU(t)M(t),
\end{cases}
\end{equation}
so that $M(t)= -iU^{-1}(t)\dot U(t)$ for all $t$.

This concludes the proof that the zeros $\lambda_k(t)$ are the eigenvalues of $X(t)$ with this choice of $L$, under the hypothesis that the zeros are distinct in an open interval around $t$ that contains $0$. We deduce the following formula for $\psi(z, t)$, still under this hypothesis: 

\begin{equation}
    \psi(z,t) = e^{a(t)z^2 + b(t)z + c(t)}\det(X(t) - zI),
\end{equation}
where $a(t), b(t), c(t)$ satisfy the differential equations of Theorem \ref{th:complexcalogeromoserzeros}. The reader may verify that we can remove the hypothesis by noticing that if we set $\psi$ equal to the right-hand side of the latter equation, then $\psi$ satisfies the Schrödinger equation (\ref{eq:schrodpolreal}), which admits a unique solution for a given initial condition. In that case, the zeros of $\psi$ are, by definition, the eigenvalues of $X(t)$.

\iffalse
\begin{cor}
The eigenvalues of $\tilde{L} \coloneqq Le^{-i\omega t}$ are the integrals of motion.
\end{cor}
\begin{proof}
    We show that $(\tilde L, M)$ satisfies the regular Lax equation: 
    \begin{equation}
    \dot {\tilde L} + i[M, \tilde L] = 0
    \end{equation}
    Indeed : 
    \begin{align}
        \dot{\tilde L} + i [M, \tilde L] &= \dot L e^{-i\omega t} -i\omega L e^{-i\omega t} + i[M, Le^{-i\omega t}]\\
        &= (\dot L+i[M, L] - i\omega L)e^{-i\omega t} \\
        &= 0
    \end{align}
    
    This implies, as originally shown in \cite{lax68}, that the eigenvalues of $\tilde L$ are integrals of motion. More precisely, $\tilde L$ must follow the unitary evolution: 
    \begin{equation}
    \tilde L(t) = V(t)\tilde L(0) V(t)^{-1},
    \end{equation}
    where $V$ is the solution of the Cauchy problem: 
    \begin{equation}
    \begin{cases}
    V(0) = I\\
    \dot V(t) = -iM(t)V(t).
    \end{cases}
    \end{equation}
    Indeed, the right hand side of the last equation is the solution of the same Cauchy problem as $\tilde L(t)$, since $V(0)\tilde L(0) V(0)^{-1} = \tilde L(0)$, and:
    \begin{equation}
        \frac{d}{dt}(V \tilde L(0) V^{-1}) + i[M, V\tilde L(0)V^{-1}] = -iMV\tilde L(0) V^{-1} +i V \tilde L(0) V^{-1} MVV^{-1} +iMV\tilde L(0)V^{-1} -iV\tilde L (0)V^{-1}M = 0.
    \end{equation}

\end{proof}
\fi

%--------------------------------------------

\section{Initial conditions leading to real-valued zeros (proof of Theorem \ref{th:wf_have_real_zeros})}
\label{app:isolated}

For the phase-shift Hamiltonian ($A=B=\frac{1}{2}, C = D = E = F = 0$), the change of variable $\lambda_k\mapsto(4B^2)^{-4}\left(\lambda_k + \frac{CE - 2BD}{\omega^2}\right)$ used in the previous section is trivial. In that case, the matrix resolution of the Calogero--Moser system from Theorem \ref{th:complexcalogeromoserintegration} yields that the complex zeros $\lambda_k(t)$ of the extended wavefunction $\psi(z, t)$ at time $t$ are the eigenvalues of the matrix: 

\begin{equation}
    \Lambda(t) = \Lambda(0)e^{-it} + L(0)\sin(t),
\end{equation}

where $\Lambda = \text{diag}(\lambda_1, \ldots, \lambda_r)$, and 

\begin{equation}
L_{jk} = (\dot{ \lambda}_j + i  \lambda_j) \delta_{jk} + i(1-\delta_{jk})\frac{1}{ \lambda_j -  \lambda_k}. 
\end{equation}

Let us introduce the interaction matrix: 
\begin{equation}
P \coloneqq i(1-\delta_{jk})\frac{1}{\lambda_j - \lambda_k},
\end{equation}
so that the matrix resolution can now be rewritten: 

\begin{equation}
    \Lambda(t) = \Lambda(0)\cos(t) + \dot \Lambda(0)\sin(t) + P(0)\sin(t).
\end{equation}

We can now inject the first-order condition from Theorem \ref{th:complexcalogeromoserzeros} for the case of the rotation, which is, in matrix form:
\begin{equation}
\dot \Lambda(0) = 2ia(0)\Lambda(0).
\end{equation}

This yields: 
\begin{equation}
\label{eq:matressplit}
\Lambda(t) = (\cos(t) + 2ia(0)\sin(t))\Lambda(0) + \sin(t)P(0).
\end{equation}

Note that the first term on the right-hand side is a diagonal matrix, the entries of which follow an ellipse centered at the origin. It follows that each of these entries crosses the real axis exactly twice between $t = 0$ and $t = 2\pi$. The second term is a fully off-diagonal matrix, that we would like to treat as a perturbation of the main ellipse trajectory. To this aim, we use the following result:

\begin{theo}[Gershgorin circle theorem \cite{gers31}]
    Let $A$ be an $n \times n$ complex matrix. For $1 \le i \le n$, define the $i$-th Gershgorin disc: 
    \begin{equation}
    D_i \coloneq \{z \in \mathbb{C} \mid \vert z - A_{ii}\vert \le \sum_{j\neq i} \vert A_{ij} \vert\}.
    \end{equation}
    Then, if the union of $k$ of the discs $D_i$ is disjoint from the union of the $n-k$ other discs, then the former union contains exactly $k$ eigenvalues of $A$, counted with their multiplicities.

    In particular, if all the discs are disjoint, then for $1 \le i \le n$, $D_i$ contains exactly one eigenvalue of $A$.
\end{theo}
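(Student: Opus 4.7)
The plan is to prove the theorem in two stages: first a \emph{weak form} asserting that every eigenvalue of $A$ lies in $\bigcup_i D_i$, then upgrade to the counting statement via a continuous-deformation argument combined with the argument principle.

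For the weak form, let $\lambda$ be an eigenvalue of $A$ with associated eigenvector $v \neq 0$. Pick an index $i$ at which $|v_i| = \max_j |v_j|$, so $|v_i| > 0$. The $i$-th component of $Av = \lambda v$ reads $(\lambda - A_{ii}) v_i = \sum_{j \neq i} A_{ij} v_j$. Dividing by $v_i$, taking absolute values, and using $|v_j|/|v_i| \le 1$ gives $|\lambda - A_{ii}| \le \sum_{j \neq i} |A_{ij}|$, so $\lambda \in D_i \subseteq \bigcup_\ell D_\ell$.

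For the counting statement, I would introduce the homotopy $A(t) = D + t(A - D)$ with $D = \mathrm{diag}(A_{11}, \ldots, A_{nn})$, so $A(0) = D$ and $A(1) = A$. After relabeling, write $U = D_1 \cup \cdots \cup D_k$ for the cluster of $k$ discs and $V = D_{k+1} \cup \cdots \cup D_n$ for the complementary cluster, with $U \cap V = \emptyset$ by hypothesis. The Gershgorin disc of $A(t)$ centered at $A_{ii}$ has radius $t R_i \le R_i$ (where $R_i = \sum_{j \neq i} |A_{ij}|$) and hence sits inside $D_i$. Applying the weak form to $A(t)$, all of its eigenvalues remain in $U \cup V$ for every $t \in [0, 1]$.

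To count, consider $p_t(z) = \det(zI - A(t))$, which depends polynomially on $t$. Since $U$ and $V$ are disjoint compact sets, I can pick a simple closed contour $\Gamma$ enclosing $U$ and disjoint from $V$, lying in the complement of $U \cup V$; no root of $p_t$ ever crosses $\Gamma$. By the argument principle the number of roots of $p_t$ inside $\Gamma$ is $\frac{1}{2\pi i} \oint_\Gamma p_t'(z)/p_t(z)\, dz$, a continuous integer-valued function of $t$, hence constant on $[0,1]$. At $t=0$ the roots are the diagonal entries $A_{ii}$, exactly $k$ of which lie in $U$; therefore exactly $k$ eigenvalues of $A = A(1)$ lie in $U$, counted with multiplicity. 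The particular case follows by applying this with $k = 1$ to each disc. The main subtlety is that naive ``continuity of eigenvalues'' is delicate at crossings or when multiplicities change; invoking the argument principle against a fixed contour sidesteps this by counting the whole multiset at once rather than tracking individual roots.
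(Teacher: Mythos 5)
The paper does not actually prove this statement: it is quoted as a classical result, attributed directly to Gershgorin's 1931 paper via the citation, and used as a black box in Appendix F. So there is no internal proof to compare against; what you have written is the standard textbook argument, and it is essentially correct. The weak form via the maximal component of an eigenvector is exactly right, and the counting argument via the homotopy $A(t) = D + t(A-D)$ is the classical continuity proof, with the nice touch that you replace naive eigenvalue-tracking by the integral $\frac{1}{2\pi i}\oint_\Gamma p_t'(z)/p_t(z)\,dz$, which is continuous and integer-valued in $t$, hence constant --- this correctly sidesteps the subtlety at eigenvalue collisions that you flag. One small repair is needed: a \emph{single} simple closed contour enclosing $U$ and disjoint from $V$ need not exist. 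For example, if $U$ consists of two discs far apart and a disc of $V$ sits between them, any simple closed curve enclosing both components of $U$ must also enclose part of $V$. The fix is standard: take $\Gamma$ to be a \emph{cycle}, i.e., a finite union of closed curves, such as the boundary of the neighborhood $\{z \in \mathbb{C} \mid \mathrm{dist}(z, U) \le \varepsilon\}$ with $\varepsilon$ less than half the distance between the compact disjoint sets $U$ and $V$; the argument principle and your constancy-in-$t$ argument apply verbatim to such cycles, since $p_t$ never vanishes on $\Gamma$ for any $t \in [0,1]$. With that adjustment your proof is complete, and it is as strong as the cited result the paper relies on.
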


For $1 \le i \le n$, and $0 \le t < 2\pi$, we denote by $D_i(t)$ the $i$-th Gershgorin disc of $\Lambda(t)$. 

Our strategy is to show that, for sufficiently far apart initial zeros $\lambda_k(0)$, the Gershgorin discs remain disjoint during the whole evolution. Hence, $\lambda_i(t) \in D_i(t)$ for all $0 \le t < 2\pi$. Moreover, if the radius of $D_i(t)$ is sufficiently small, the trajectory of $\lambda_i$ is tightly controlled by the ellipse followed by $\Lambda_{ii}$, meaning that $\lambda_i$ must cross the real axis in a small time window around the time where $\Lambda_{ii}$ does.

The radius of $D_i(t)$ is by definition: 
\begin{align}
R_i(t) &\coloneq \sum_{j \neq i}\vert \Lambda_{ij}(t) \vert \\
&= \sum_{j \neq i} \vert P_{ij}(0)\sin(t)\vert \\
&= \vert \sin(t) \vert \sum_{j \neq i}\frac{1}{\vert \lambda_i(0) - \lambda_j(0) \vert}.
\end{align}

The condition for the Gershgorin discs to be pairwise disjoint is: 

\begin{equation}
    \label{ineq:cnsgershgorindiscs}
    \forall 0\le t < 2\pi, 1 \le i < j \le r, \vert \Lambda_{ii}(t) - \Lambda_{jj}(t)\vert \ge R_i(t) + R_j(t).
\end{equation}

Let us fix $1 \le i, j \le r$. Using (\ref{eq:matressplit}), we have, for all $0 \le t < 2\pi$: 
\begin{align}
    \vert \Lambda_{ii}(t) - \Lambda_{jj}(t)\vert &= \vert (\lambda_{i}(0) - \lambda_{j}(0)) (\cos(t) + 2ia(0)\sin(t))\vert \\
    &= \vert \lambda_{i}(0) - \lambda_{j}(0) \vert \sqrt{\cos^2(t) + 4a(0)^2\sin^2(t)}.
\end{align}
Moreover, 
\begin{align}
    R_i(t) + R_j(t) &\coloneq \vert \sin(t) \vert \left(\sum_{k \neq j} \frac{1}{\vert\lambda_k(0) - \lambda_j(0)\vert} + \sum_{k \neq i}\frac{1}{\vert \lambda_k(0) - \lambda_i(0) \vert } \right).
\end{align}
Note that the desired condition (\ref{ineq:cnsgershgorindiscs}) is trivially verified when $\sin(t) = 0$. Hence, for $D_i$ and $D_j$ to never intersect, it is sufficient that for all $0 < t < 2\pi$, $t \neq \pi$: 

\begin{align}
    \vert \lambda_{i}(0) - \lambda_{j}(0) \vert \sqrt{\cos^2(t) + 4a(0)^2\sin^2(t)} &\ge \vert \sin(t) \vert \left(\sum_{k \neq j} \frac{1}{\vert\lambda_k(0) - \lambda_j(0)\vert} + \sum_{k \neq i}\frac{1}{\vert \lambda_k(0) - \lambda_i(0) \vert } \right) \\
     \vert \lambda_{i}(0) - \lambda_{j}(0) \vert \sqrt{\cot^2(t) + 4a(0)^2} &\ge \left(\sum_{k \neq j} \frac{1}{\vert\lambda_k(0) - \lambda_j(0)\vert} + \sum_{k \neq i}\frac{1}{\vert \lambda_k(0) - \lambda_i(0) \vert } \right)\\
     \frac{\vert \lambda_{i}(0) - \lambda_{j}(0) \vert}{\sum_{j \neq k}\frac{1}{\vert\lambda_k(0) - \lambda_j(0)\vert} + \sum_{k \neq i}\frac{1}{\vert \lambda_k(0) - \lambda_i(0) \vert }} &\ge \frac{1}{2a(0)}.
\end{align}
This yields the simpler (but weaker) sufficient condition for all the Gershgorin discs to stay pairwise disjoint throughout the evolution: 
\begin{equation}
\min_{i,j} \vert \lambda_i (0) -\lambda_j(0)\vert \ge \sqrt{\frac{r - 1}{a(0)}}.
\end{equation}

Finally, note that at $t = 0$ and $t = \pi$, $R_i(t) = 0$, meaning that $\Lambda_{ii}(t) = \lambda_i(t)$. But by definition of $\Lambda(t)$, $\Lambda_{ii}(0) = -\Lambda_{ii}(\pi)$, and hence $\lambda_i(0) = -\lambda_i(\pi)$. Since $\lambda_i$ is continuous (it is a solution of the differential system (\ref{eq:CMsimplifiedmotion})), its trajectory must cross the real axis at some point.
Moreover, since $\lambda_i(0) = \lambda_i(2\pi)$, $\lambda_i$ must cross the real axis an even number of times, meaning at least twice. Hence, the total number of zeros (counted with multiplicities) appearing on real rotated wavefunctions is at least $2r$.

%--------------------------------------------
\section{Real-valued zeros for odd stellar rank (proof of Theorem \ref{th:odd_stellar_rank_crossing})}
\label{app:oddcross}

Recall the matrix resolution of the Calogero--Moser system from Theorem \ref{th:complexcalogeromoserintegration}: the complex zeros $\lambda_k(t)$ of the extended wavefunction $\psi(z, t)$ at time $t$ are the eigenvalues of the matrix: 
\begin{equation}
    \label{eq:appG_matresCM}
    \Lambda(t) = \Lambda(0)e^{-it} + L(0)\sin(t).
\end{equation}

A consequence of this formula is the following statement.
\begin{lemma}
    Consider for all $t > 0$, the set $Z(t) \coloneq \{\lambda_i(t) \mid 1 \le i \le n\}$.
    Then, the map:
    \begin{equation}
    \sigma_t\colon \begin{array}{rcl}
Z(t) & \longrightarrow & Z(t)\\
\lambda_i(t) & \longmapsto     & -\lambda_i(t+\pi)
\end{array}
\end{equation}

    is a well-defined permutation of $Z(t)$.
\end{lemma}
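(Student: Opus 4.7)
The plan is to exploit the explicit matrix resolution formula \eqref{eq:appG_matresCM} to extract a symmetry of $\Lambda(\cdot)$ under the time shift $t \mapsto t+\pi$ at the matrix level, and then transfer it to the spectrum. Substituting $t \to t+\pi$ in \eqref{eq:appG_matresCM} and using $e^{-i(t+\pi)} = -e^{-it}$ together with $\sin(t+\pi) = -\sin(t)$ yields immediately
\begin{equation}
\Lambda(t+\pi) \;=\; -\Lambda(0)\,e^{-it} \;-\; L(0)\sin(t) \;=\; -\Lambda(t).
\end{equation}

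Since the spectrum of $-\Lambda(t)$ is the negation of the spectrum of $\Lambda(t)$, and since the eigenvalues of $\Lambda(s)$ are by definition the zeros $\lambda_i(s)$ of the extended wavefunction, this gives the multiset identity $\{\lambda_i(t+\pi)\}_{i=1}^{r} = \{-\lambda_i(t)\}_{i=1}^{r}$, or equivalently $\{-\lambda_i(t+\pi)\}_i = \{\lambda_i(t)\}_i$. Passing to sets gives $-Z(t+\pi) = Z(t)$, so $\sigma_t$ sends $Z(t)$ into itself; the multiset equality then forces $\sigma_t$ to be a bijection of $Z(t)$, hence a permutation.

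The main obstacle is the well-definedness of $\sigma_t$ as a set-theoretic map at instants $t$ where several zeros coalesce: two distinct indices $i \neq j$ with $\lambda_i(t) = \lambda_j(t)$ could \emph{a priori} produce two different images $-\lambda_i(t+\pi)$ and $-\lambda_j(t+\pi)$. I would resolve this either by viewing $\sigma_t$ as a permutation of the indexed family $(\lambda_i(t))_{i=1}^{r}$, which is the natural object since each $\lambda_i$ is a continuous function of $t$ via the Calogero--Moser dynamics, or by using analyticity of the spectrum of $\Lambda(\cdot)$ to extend $\sigma_t$ by continuity from the open dense set of times at which all zeros are simple. Once well-definedness is settled, all the remaining content of the lemma follows directly from the matrix identity $\Lambda(t+\pi) = -\Lambda(t)$ established above.
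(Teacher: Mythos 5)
Your proposal is correct and follows essentially the same route as the paper: both derive the matrix identity $\Lambda(t+\pi)=-\Lambda(t)$ from the explicit resolution $\Lambda(t)=\Lambda(0)e^{-it}+L(0)\sin(t)$, transfer it to the spectrum to get $-Z(t+\pi)=Z(t)$, and conclude by surjectivity onto a finite set. Your extra discussion of well-definedness at times when zeros coalesce is a point the paper's proof passes over silently, so it only strengthens the argument.
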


\begin{proof}
    The matrix resolution (\ref{eq:appG_matresCM}) implies that:
    \begin{equation}
    \label{eq:Lambda_inv}
    -\Lambda(t+\pi) = \Lambda(t),
    \end{equation}
    and hence, for $1 \le i \le n, -\lambda_i(t+\pi)$ is in $Z(t)$, meaning that $\sigma_t$ is well defined. Equation (\ref{eq:Lambda_inv}) also implies that $\sigma_t$ is surjective, and since $Z(t)$ is finite, $\sigma_t$ must be a permutation.
\end{proof}

To conclude the proof of Theorem \ref{th:odd_stellar_rank_crossing}, consider the sets:
\begin{align}
Z^+ &= \{z \in Z(0) \mid \Im(z) > 0\}\\
Z^- &= \{z \in Z(0) \mid \Im(z) < 0\}.
\end{align}
By definition, $\vert Z^+ \vert= n^+$, and $\vert Z^- \vert = n^-$.
Note that we can suppose that all elements of $Z$ have nonzero imaginary part, as the statement is trivial otherwise.

Suppose that no quadrature wavefunction of $\ket{\psi}$ has a real zero. 
Then, for all $1 \le i \le n$, $\lambda_i$ never crosses the real axis, and since $\lambda_i$ is continuous (it is a solution of the differential system (\ref{eq:CMsimplifiedmotion})), it must remain on the same side of the real axis.  In particular, $\Im(\lambda_i(0))$ and $\Im(\lambda_i(\pi))$ have the same sign.
Consequently, $\sigma_0(Z^+) \subseteq Z^-$, and $\sigma_0(Z^-) \subseteq Z^+$. But $\sigma_0$ is a permutation of $Z(0) = Z^+ + Z^-$, which means that these set inclusions must be equalities, without what we would have $\sigma_0(Z(0)) = \sigma(Z^+) \cup \sigma(Z^-) \subsetneq Z^- \cup Z^+ = Z(0)$.

Thus, $\sigma_0(Z^+) = Z^-$. Since $\sigma_0$ is a permutation, $\vert Z^+ \vert = \vert \sigma_0(Z^+)\vert$, and hence $n^+ = n^-$, which concludes.

\end{document}